\pgfplotsset{compat=1.11}
\def\E{\mathbb{E}}
\def\Lam{\Lambda}
\def\E{\mathbb{E}}
\def\Z{\mathbb{Z}}
\def\eps{\epsilon}
\def\del{\delta}
\def\lam{\lambda}
\def\cG{\mathcal G}
\def\cC{\mathcal C}
\def\cP{\mathcal P}
\def\cI{\mathcal I}
\def\kotecky{Koteck\'y}
\newtheorem*{theorem*}{Theorem}
\newtheorem{theorem}{Theorem}[section]
\newtheorem{lemma}[theorem]{Lemma}
\newtheorem{cor}[theorem]{Corollary}
\newtheorem*{prop*}{Proposition}
\newtheorem*{claim*}{Claim}
\newtheorem*{fact*}{Fact}
\newtheorem*{remark*}{Remark}
\newtheorem*{defn*}{Definition}
\theoremstyle{definition}
\begin{document}

\title[Counting independent sets in unbalanced bipartite graphs]{Counting independent sets in unbalanced \\bipartite graphs}

\author{Sarah Cannon}
\address{University of California Berkeley}
\email{sarah.cannon@berkeley.edu}
\author{Will Perkins}
\address{University of Illinois at Chicago}
 \email{math@willperkins.org}

\begin{abstract}
We give an FPTAS for approximating the partition function of the hard-core model for bipartite graphs when there is sufficient imbalance in the degrees or fugacities between the sides $(L,R)$ of the bipartition.  This includes, among others, the biregular case when $\lam=1$ (approximating the number of independent sets of $G$) and $\Delta_R \geq 7\Delta_L \log(\Delta_L)$.  Our approximation algorithm is based on truncating the cluster expansion of a polymer model partition function that expresses the hard-core partition function in terms of deviations from independent sets that are empty on one side of the bipartition.
As a consequence of the method, we  also prove that the hard-core model on such graphs exhibits exponential decay of correlations by utilizing connections between the cluster expansion and joint cumulants. 
\end{abstract}

\maketitle

\section{Introduction}

The computational complexity of approximating the number of independent sets in a bipartite graph is a central open problem in the field of approximate counting:  neither a general polynomial-time algorithm nor a proof of NP-hardness is known.  The problem defines a complexity class \#BIS (bipartite independent set) and many other important counting problems have been shown to be as hard to approximate as \#BIS, including counting weighted independent sets in  bounded-degree bipartite graphs when the weighting factor $\lam$ is large enough.   Counting weighted independent sets arises naturally in statistical physics as it is exactly the problem of computing the partition function of the hard-core model at \textit{fugacity} $\lam$.  

We give an FPTAS for both the weighted and unweighted counting problems, provided that the bipartite graph has bounded degree and there is a sufficient imbalance in the degrees or fugacities between the sides of the bipartition. 

More formally, the hard-core model on a graph $G$ is a probability distribution over the independent sets of $G$ given by
\begin{align*}
\mu_G(I) &= \frac{\lam^{|I|}}{Z(G,\lam)}
\end{align*}
where $\lam > 0$ is the {fugacity} and the normalizing constant, or the partition function, is
\begin{align*}
Z(G,\lam) &= \sum_{I \in \cI(G)} \lam^{|I|} \, ,
\end{align*}
where $\cI(G)$ is the set of all independent sets of $G$. 

More generally, one can consider the multivariate hard-core model, assigning a fugacity $\lam_v$ to each vertex $v \in V(G)$. The resulting partition function is
\begin{align*}
Z(G) &= \sum_{I \in \cI(G)} \prod_{v \in I} \lam_v \,.
\end{align*}
 We focus here on a bivariate hard-core model, with a fugacities $\lam_L, \lam_R$ assigned to vertices on the respective sides of a bipartite graph with bipartition $(L,R)$.  We denote the partition function by $Z(G,\lam_L,\lam_R)$ or $Z(G)$ for short. 

 There are two natural computation problems associated to spin models such as the hard-core model: the approximate counting problem and the approximate sampling problem.
An $\eps$-relative approximation to $Z(G)$ is a number $\hat Z$ so that 
\begin{align*}
e^{-\eps} \hat Z \le Z(G) \le e^{\eps} \hat Z \,.
\end{align*}
A \textit{fully polynomial-time approximation scheme} (FPTAS) for $Z(G)$ is an algorithm that given $G$ and $\eps>0$, outputs an $\eps$-relative approximation to $Z(G)$ and runs in time polynomial in $|V(G)|$ and $1/\eps$.  An \textit{efficient sampling scheme} is a randomized algorithm that outputs an independent set from $\cI(G)$ with distribution $\hat \mu$ so that $\| \mu_G - \hat \mu \|_{TV}<\eps$ and runs in time polynomial in $|V(G)|$ and $1/\eps$.
 
In general, the computational complexity of approximating $Z(G)$ is well understood. For graphs of maximum degree $\Delta$, there is an FPTAS due to Weitz~\cite{weitz2006counting} when $\lam < \lam_c(\Delta)$,  the uniqueness threshold for the infinite $\Delta$-regular tree. On the other hand, Sly~\cite{sly2010computational}, Sly and Sun~\cite{sly2014counting}, and Galanis, Stefankovic, and Vigoda~\cite{galanis2011improved} showed that there is no polynomial-time approximation algorithm for $\lam > \lam_c(\Delta)$ unless $\text{NP} = \text{RP}$.  

For bipartite graphs, however, the situation is far from clear.  There is no hardness known, and one might expect that the problem becomes easier as $\lam$ gets large since finding a maximum size independent set in a bipartite graph is tractable, unlike in general graphs. In fact, Dyer, Greenhill, Goldberg, and Jerrum~\cite{dyer2004relative} defined the complexity class \#BIS to capture the complexity of approximating the number of independent sets (i.e. approximating $Z(G)$ at $\lam=1$) in bipartite graphs.  The complexity of \#BIS is still unresolved, but many other important approximate counting problems have been shown to be \#BIS-hard, including approximating the partition function of the $q$-color ferromagentic Potts model ($q \ge 3$)~\cite{goldberg2012approximating,galanis2016ferromagnetic}.  Refined results of Cai, Galanis, Goldberg, Guo, Jerrum, Stefankovic, and Vigoda show that for bipartite graphs of maximum degree $\Delta$, it is already \#BIS-hard to approximate $Z(G)$ for any $\lam > \lam_c(\Delta)$~\cite{cai2016hardness}.

There are a handful of algorithmic results on approximating $Z(G)$ that exploit bipartite structure.  Liu and Lu~\cite{liu2015fptas} gave an FPTAS for $Z(G)$ at $\lam=1$ on bipartite graphs when the degree on one side of the bipartition is bounded by $5$, with arbitrary degrees on the other side.  Helmuth, Perkins, and Regts~\cite{helmuth2018contours} used contour models from Pirogov-Sinai theory in statistical physics to give an FTPAS for the hard-core model on subsets of $\Z^d$ for sufficiently large $\lam$.   Building on this approach, Jenssen, Keevash, and Perkins~\cite{JenssenAlgorithmsSODA} gave an FPTAS for the hard-core model on bipartite expander graphs at large $\lam$, and these results were sharpened in the case of random regular bipartite graphs~\cite{JKP2,liao2019counting}.    Most relevant for this paper, Barvinok and Regts~\cite{barvinok2017weighted} give an FPTAS for $Z(G)$ for biregular, bipartite graphs with unequal degrees when the fugacity is sufficiently large, as an application of a much more general approximate counting result.

Here we give an FPTAS for the hard-core model on bipartite graphs of bounded degree whenever there is sufficient asymmetry in the degrees on either side or the fugacities assigned to the respective sides of the bipartition. In most biregular cases our results give significant improvement to the parameters from~\cite{barvinok2017weighted}, as we discuss below, and our algorithm does not require biregularity. 
More importantly, the method, based on the cluster expansion and the \kotecky-Preiss condition~\cite{kotecky1986cluster} and related to that of~\cite{helmuth2018contours, JenssenAlgorithmsSODA}, gives detailed probabilistic information about the hard-core model in addition to the algorithmic results.  In particular, we show that on bipartite graphs with parameters satisfying our conditions, the correlation between the occupancies of two vertices in the hard-core model decays exponentially fast in their distance.

The connection between convergence of the cluster expansion and decay of correlation for lattice spin models is well studied in statistical physics.  We hope that by illustrating the applicability of these techniques to general graphs and showing their connection to algorithmic results, we may encourage their adoption in computer science.

\subsection{Algorithmic results}

Let $\cG( \Delta_L, \del_R, \Delta_R)$ be the family of bipartite graphs $G$ with bipartition $(L,R)$ so that each $v \in L$ has degree  $d_v\le\Delta_L$ and each $v \in R$ has degree $d_v$ satisfying $\del _R \le d_v \le \Delta_R$.

We will consider the hard-core model on a graph $G \in \cG( \Delta_L, \del_R, \Delta_R)$ with fugacity $\lam_L$ assigned to each vertex $v \in L$ and fugacity $\lam_R$ assigned to each $v \in R$.  Our main algorithmic result is a sufficient condition on the imbalance of the graph (in terms of $\Delta_L, \del_R, \Delta_R, \lam_L, \lam_R$) to obtain an FPTAS for approximating $Z(G, \lam_L,\lam_R)$.  

\begin{theorem}
\label{thmAlgorithm}
Suppose
\begin{align}
\label{eqConditionMain}
6\Delta_L \Delta_R \lam_R &\le (1+\lam_L)^{\frac{\del_R}{\Delta_L}}   \, .
\end{align}
 Then there is an FPTAS for approximating the hard-core partition function $Z(G,\lam_L,\lam_R)$ and an efficient sampling scheme for sampling from $\mu_G$ for all $G \in \cG( \Delta_L, \del_R, \Delta_R)$ at fugacities $\lam_L$ and $\lam_R$. 
\end{theorem}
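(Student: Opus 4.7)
The plan is to recast $Z(G,\lam_L,\lam_R)$ as an abstract polymer partition function in which each polymer is a small, $H$-connected ``defect'' on the $R$ side, show that \eqref{eqConditionMain} is exactly the Koteck\'y--Preiss condition for this model, and then apply the standard truncation of the cluster expansion.

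\textbf{Step 1: polymer representation.} Since $G$ is bipartite, summing over the $L$-part of each independent set after fixing its $R$-part $S$ yields
\[
Z(G,\lam_L,\lam_R) = (1+\lam_L)^{|L|} \sum_{S\subseteq R} \lam_R^{|S|} (1+\lam_L)^{-|N(S)|}.
\]
Build the auxiliary graph $H$ on $R$ by joining $u,v \in R$ whenever $N(u)\cap N(v)\neq \emptyset$. A \emph{polymer} is a nonempty $H$-connected $\gamma\subseteq R$ with activity $w(\gamma):=\lam_R^{|\gamma|}(1+\lam_L)^{-|N(\gamma)|}$, and two polymers are compatible iff they are disjoint in $H$, equivalently iff their $L$-neighborhoods are disjoint. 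Any $S\subseteq R$ decomposes uniquely into compatible polymers (its $H$-components), and on compatible families $|N(\gamma_1\cup\cdots\cup\gamma_k)|=\sum_i|N(\gamma_i)|$, so the weights factorize. Hence $Z=(1+\lam_L)^{|L|}\,\Xi$, where $\Xi$ is the abstract polymer partition function of this model.

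\textbf{Step 2: Koteck\'y--Preiss.} Each vertex of $\gamma\subseteq R$ sends at least $\del_R$ edges into $L$, and each $L$-vertex absorbs at most $\Delta_L$ of them, so $|N(\gamma)|\ge \del_R|\gamma|/\Delta_L$ and
\[
|w(\gamma)| \le \bigl(\lam_R (1+\lam_L)^{-\del_R/\Delta_L}\bigr)^{|\gamma|}.
\]
The graph $H$ has maximum degree at most $\Delta_R(\Delta_L-1)$, so a standard connected-subgraph enumeration gives at most $(e\Delta_L\Delta_R)^k$ polymers of size $k$ rooted at a given vertex, and at most $\Delta_L\Delta_R|\gamma|$ vertices of $R$ can host a polymer incompatible with $\gamma$. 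Choosing the decoration function $a(\gamma)=|\gamma|$, the Koteck\'y--Preiss condition
\[
\sum_{\gamma'\not\sim\gamma} |w(\gamma')|\,e^{2|\gamma'|} \le a(\gamma)
\]
reduces to convergence of a geometric series in $e^{O(1)}\Delta_L\Delta_R\,\lam_R(1+\lam_L)^{-\del_R/\Delta_L}$, which is precisely what \eqref{eqConditionMain} provides; the explicit constant $6$ is chosen to absorb the factors of $e$ and an overall geometric-series slack.

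\textbf{Step 3: algorithm, sampling, and main obstacle.} Once Koteck\'y--Preiss holds, the cluster expansion $\log \Xi=\sum_\Gamma \phi(\Gamma)\prod_{\gamma\in\Gamma} w(\gamma)$ converges absolutely, and clusters of size $\ge m$ contribute at most $|V(G)|\,e^{-\Omega(m)}$ to $\log\Xi$. Setting $m=O(\log(|V(G)|/\eps))$ and computing the truncated sum via the Patel--Regts algorithm (enumerating $H$-connected subgraphs, whose weights are directly computable from $|N(\gamma)|$) produces an $\eps$-approximation of $\log Z$ in time polynomial in $|V(G)|$ and $1/\eps$. Approximate sampling then follows by standard self-reducibility combined with the polymer-model sampler developed in this line of work. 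The main technical obstacle is the constant-tracking in Step~2: verifying that the growth rate of $H$-connected subgraphs, together with the $e^{|\gamma'|+a(\gamma')}$ slack required by Koteck\'y--Preiss and the sum over $R$-vertices incompatible with $\gamma$, can all be absorbed into the constant $6$ in \eqref{eqConditionMain}, while keeping the polymer decomposition clean enough that the algorithmic step has no extra overhead.
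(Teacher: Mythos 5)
Your overall architecture is the same as the paper's: express $Z$ as $(1+\lambda_L)^{|L|}$ times a polymer partition function whose polymers are $2$-linked (equivalently, neighborhood-overlapping) subsets of $R$ with weights $\lambda_R^{|\gamma|}(1+\lambda_L)^{-|N(\gamma)|}$, verify the Koteck\'y--Preiss condition using $|N(\gamma)|\ge \delta_R|\gamma|/\Delta_L$ and a connected-subgraph count, truncate the cluster expansion at size $m=O(\log(|V(G)|/\eps))$, and sample via polymer-level self-reducibility. Steps 1 and 3 are essentially the paper's Lemma~\ref{lemPPart} and its algorithmic discussion (one small caveat: vertex-level self-reducibility is \emph{not} available here, since deleting vertices can violate the minimum-degree requirement defining $\cG(\Delta_L,\delta_R,\Delta_R)$; the paper must invoke the polymer-level sampler of Helmuth--Perkins--Regts precisely for this reason, so you should make that explicit rather than calling it ``standard self-reducibility'').

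The genuine gap is exactly the point you defer as ``constant-tracking'': with your choice of decoration, the constant $6$ in \eqref{eqConditionMain} does \emph{not} suffice, so the claim that it ``absorbs the factors of $e$'' is false as stated. Taking $a(\gamma)=|\gamma|$ and slack $e^{2|\gamma'|}$, the per-size-$k$ ratio in your sum over polymers containing a fixed vertex is (up to the $1/(e\Delta_L\Delta_R)$ prefactor) $\bigl(e^{3}\Delta_L\Delta_R\,\lambda_R(1+\lambda_L)^{-\delta_R/\Delta_L}\bigr)^k$, and under \eqref{eqConditionMain} the base is only bounded by $e^{3}/6\approx 3.35>1$, so the series diverges; with this decoration you would need a constant of roughly $e^{3}\approx 20$ (or worse, to leave room for the sum) rather than $6$. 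The paper closes this by choosing the decorations much more parsimoniously: $a(\gamma)=|\gamma|/2$ and $b(\gamma)=\eta|\gamma|$ with $\eta=0.1$, using the refined count of $2$-linked sets $(e\Delta_R(\Delta_L-1))^{k-1}/k^{3/2}$ and the numerical fact that $\sum_{k\ge1}s^k/k^{3/2}<e/2$ for $s\le 0.832$, so that the relevant base is $e^{3/2+\eta}/6\approx 0.825<0.832$. In addition, the per-vertex tail bound $\sum_{\Gamma\ni v}|w(\Gamma)|e^{\eta|\Gamma|}\le 1$, which is what gives the truncation error $|R|e^{-\eta m}$ you assert in Step 3, is obtained in the paper by adjoining zero-weight ghost polymers $\gamma_v$ and reapplying Koteck\'y--Preiss; you should either include that device or supply an equivalent argument. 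Since the theorem's content is the explicit condition \eqref{eqConditionMain}, the missing quantitative verification is the heart of the proof, not a routine afterthought.
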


Some special cases of Theorem~\ref{thmAlgorithm} for the bivariate and univariate hard-core partition functions are given in the following corollary. Throughout, we assume all logarithms have base $e$. 
\begin{cor}
\label{corMain}
There is an FPTAS for approximating $Z(G)$ and an efficient sampling scheme for $\mu_G$ when:
\begin{enumerate}
\item $G$ is a $\Delta$-regular bipartite graph, and the fugacities $\lam_L$ and $\lam_R$ satisfy
\begin{align*}
\lam_L \ge 6 \Delta^2 \lam_R \, .
\end{align*}
\item $G$ is a biregular bipartite graph with degrees $\Delta_R> \Delta_L$, and the fugacity $\lam$ satisfies 
\begin{align*}
\lambda > (6\Delta_L \Delta_R)^\frac{\Delta_L}{\Delta_R - \Delta_L} \,.
\end{align*}
\item $G$ is a biregular bipartite graph with degrees $\Delta_L$ and $\Delta_R$ satisfying
\begin{align*}
\Delta_R \ge 7 \Delta_L \log (\Delta_L) \,,
\end{align*}
and the fugacity is $\lam=1$.
\end{enumerate}
\end{cor}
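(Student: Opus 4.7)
My plan is to derive each of the three cases of the corollary directly from Theorem \ref{thmAlgorithm} by specializing parameters and verifying condition \eqref{eqConditionMain}. No new ideas beyond elementary inequalities are needed.

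For part (1), the $\Delta$-regular hypothesis gives $\Delta_L = \Delta_R = \del_R = \Delta$, so the exponent $\del_R/\Delta_L$ appearing in \eqref{eqConditionMain} reduces to $1$ and the condition becomes $6\Delta^2\lam_R \le 1 + \lam_L$, which follows immediately from $\lam_L \ge 6\Delta^2\lam_R$. For part (2), with $\lam_L = \lam_R = \lam$ and $\del_R = \Delta_R$, I would use the trivial bound $(1+\lam)^{\Delta_R/\Delta_L} \ge \lam^{\Delta_R/\Delta_L}$, reducing the question to $6\Delta_L\Delta_R\lam \le \lam^{\Delta_R/\Delta_L}$. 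This rearranges to $\lam^{(\Delta_R-\Delta_L)/\Delta_L} \ge 6\Delta_L\Delta_R$, and taking the $(\Delta_L/(\Delta_R-\Delta_L))$-th root (valid since $\Delta_R > \Delta_L$) recovers exactly the hypothesized lower bound on $\lam$.

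Part (3) is the least trivial. Setting $\lam_L = \lam_R = 1$ and $\del_R = \Delta_R$, condition \eqref{eqConditionMain} becomes
\[
6\Delta_L\Delta_R \le 2^{\Delta_R/\Delta_L},
\]
equivalently $(\Delta_R/\Delta_L)\log 2 \ge \log 6 + \log\Delta_L + \log\Delta_R$. I would write $k = \Delta_R/\Delta_L$ and verify the inequality first at the threshold $k = 7\log\Delta_L$, where the left-hand side becomes $7(\log 2)(\log\Delta_L)$ and the right-hand side is $\log 6 + 2\log\Delta_L + \log k$; since $7\log 2 > 2$, the dominant terms on the left win for $\Delta_L$ in the relevant range. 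To extend to all $\Delta_R$ above the threshold, note that $x \mapsto 2^{x/\Delta_L} - 6\Delta_L x$ has positive derivative as soon as $2^{x/\Delta_L} \ge 6\Delta_L^2/\log 2$, which is comfortably satisfied at the threshold, so monotonicity pushes the bound through for every larger $\Delta_R$.

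The main (and quite mild) obstacle is tracking constants carefully in the threshold calculation of part (3); all three parts are otherwise routine invocations of Theorem \ref{thmAlgorithm}, which supplies both the FPTAS and the efficient sampling scheme.
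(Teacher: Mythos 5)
Parts (1) and (2) of your plan are correct and are exactly the ``simple substitutions'' the paper intends, so no issues there. The problem is part (3). Your claim that at the threshold $k=7\log\Delta_L$ the inequality $k\log 2 \ge \log 6 + 2\log\Delta_L + \log k$ holds ``for $\Delta_L$ in the relevant range'' is false for small $\Delta_L$: the relevant range is all $\Delta_L\ge 2$, and the inequality fails for $\Delta_L\in\{2,3,4\}$. Concretely, take $\Delta_L=2$ and $\Delta_R=10\ge 7\cdot 2\log 2\approx 9.7$, which satisfies the hypothesis of part (3); condition~\eqref{eqConditionMain} would require $6\cdot 2\cdot 10=120\le 2^{10/2}=32$, which is badly violated, and it keeps failing for $\Delta_R$ up through $14$. (Your monotonicity step also just misses at $\Delta_L=2$, but that is secondary.) So a direct verification of~\eqref{eqConditionMain} cannot prove part (3) with the constant $7$; it would only give the statement with a larger constant (the paper notes $c\ge 11$ works for all $\Delta_L\ge 2$).

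The missing ingredient is how the paper handles small $\Delta_L$: it invokes the Liu--Lu result~\cite{liu2015fptas}, which already gives an FPTAS at $\lam=1$ for bipartite graphs whose degrees on one side are bounded by $5$, thereby disposing of all $\Delta_L\le 5$. Only for $\Delta_L\ge 6$ does the paper verify the analogue of your threshold inequality (their~\eqref{eqcor3bound} with $c=7$), where it does hold and is monotone in $\Delta_R$ beyond the threshold, as in your argument. To repair your proof you must either split into the cases $\Delta_L\le 5$ (cite Liu--Lu, plus an accompanying sampling argument) and $\Delta_L\ge 6$ (your calculation, which is then fine), or weaken the constant $7$ in the statement.
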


We can compare Corollary~\ref{corMain}, part (2),   to the condition given by Barvinok and Regts~\cite{barvinok2017weighted} for an FPTAS for biregular, bipartite graphs at fugacity $\lam$: $\Delta_R > \Delta_L$ and
\begin{align*}
\lam &\ge \left( 6.7 \sqrt{\Delta_R} \right)^{\Delta_L + \Delta_R}  \,.
\end{align*}
Our bound of $\lambda > (6\Delta_L \Delta_R)^\frac{\Delta_L}{\Delta_R - \Delta_L}$ gives a significant improvement except in the case when $\Delta_R =\Delta_L+1$. In fact, for $\Delta_L$ fixed, our bound on $\lam$ \textit{decreases} as $\Delta_R$ grows.

\subsection{Correlation decay}

We next show that for graphs satisfying condition~\eqref{eqConditionMain}, vertex-to-vertex correlations decay exponentially fast in their distance. We are able to do this because the convergence of the cluster expansion, which we use to obtain the FPTAS described above, also gives additional detailed probabilistic information about the model. 

Let $\mathbf I$ be a random independent set drawn from the hard-core model $\mu_G$.  For $v \in V(G)$, let $X_v = \mathbf 1_{v\in \mathbf I}$, and  $\mu_v = \E X_v= \Pr[ v\in \mathbf I]$.  For $u, v \in V(G)$ let $X_{uv} = \mathbf 1_{u,v \in \mathbf I}$, and $\mu_{uv} = \E X_{uv} = \Pr [u \in \mathbf I \wedge v \in \mathbf I]$. For $A \subseteq V(G)$, let $X_A = \mathbf 1_{A \subseteq \mathbf I}$, and $\mu_A = \Pr[A \subseteq \mathbf I]$.

\begin{theorem}
\label{thmCorDecay}
For $\Delta_L, \Delta_R, \del_R,  \lam_L, \lam_R$ satisfying~\eqref{eqConditionMain}, there exists constants $\eps >0, C>0$ so that the following holds.  

For all $G \in \cG( \Delta_L, \del_R, \Delta_R)$, and for all $u, v \in V(G)$, we have 
\begin{align}
\label{eqPointDecay}
\left | \mu_{uv} -\mu_u \mu_v  \right| &\le Ce^{- \eps D(u,v)} \,, 
\end{align}
where  $D(\cdot,\cdot)$ is the graph distance in $G$, and $\mu_G$ is the hard-core model on $G$ with fugacities $\lam_L, \lam_R$. 
  More generally, suppose $A \subset V(G), B \subset V(G)$. Then
\begin{align}
\label{eqSetDecay}
\left | \mu_{A \cup B} - \mu_A \mu_B  \right |&\le C' e^{-\eps D(A,B)} \,,
\end{align}
where the constant $C'$ depends on $\eps$ and $|A|, |B|, |N(A)|$, and $|N(B)|$. 
\end{theorem}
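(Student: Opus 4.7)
The plan is to rewrite $\mu_{A\cup B} - \mu_A\mu_B$ as a sum of weights of clusters in the polymer expansion set up earlier for Theorem~\ref{thmAlgorithm} whose supports reach both $A$ and $B$, and then to extract exponential decay in $D(A,B)$ from the absolute convergence of that expansion guaranteed by~\eqref{eqConditionMain}. I may assume $A\cup B$ is independent; otherwise $\mu_{A\cup B}=0$ and $\mu_A\mu_B\le 1$, and the bound is trivial after enlarging $C'$.

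\emph{Reduction to a log-ratio of partition functions.} The first step uses the elementary identity $\mu_S=\bigl(\prod_{v\in S}\lam_v\bigr)\cdot Z(G-S-N(S))/Z(G)$, valid for any independent $S$, applied to $S=A,B,A\cup B$, to obtain
\begin{align*}
\log\tfrac{\mu_{A\cup B}}{\mu_A\mu_B} \;=\; \log Z(G)+\log Z(G_{AB})-\log Z(G_A)-\log Z(G_B),
\end{align*}
where $G_A,G_B,G_{AB}$ denote $G$ with $A\cup N(A)$, $B\cup N(B)$, and $(A\cup B)\cup N(A\cup B)$ removed, respectively.

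\emph{Cancellation via inclusion-exclusion.} Substituting the cluster expansion $\log Z(H)=\log Z_{\mathrm{triv}}(H)+\sum_{\Gamma} w_H(\Gamma)$ into all four terms, each cluster $\Gamma$ of $G$ contributes to the sum for a subgraph $H$ precisely when its extended support $\tilde\Gamma:=\mathrm{supp}(\Gamma)\cup N(\mathrm{supp}(\Gamma))$ avoids the removed vertices. A direct inclusion-exclusion shows that all contributions cancel except those from clusters whose support meets both $A\cup N(A)$ and $B\cup N(B)$:
\begin{align*}
\log\tfrac{\mu_{A\cup B}}{\mu_A\mu_B} \;=\; \sum_{\substack{\Gamma:\,\tilde\Gamma\cap(A\cup N(A))\ne\emptyset\\ \tilde\Gamma\cap(B\cup N(B))\ne\emptyset}} w(\Gamma) \;+\; \mathcal{E},
\end{align*}
where the residual $\mathcal{E}$ is a trivial-factor term supported on $N(A)\cap N(B)$, which is empty whenever $D(A,B)\ge 3$.

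\emph{Exponential decay from KP.} Clusters are by definition connected with respect to the polymer incompatibility relation (sharing a common $L$-neighbor), so $\tilde\Gamma$ is a connected subgraph of $G$. Any cluster contributing to the sum above must therefore satisfy $\|\Gamma\|\ge\tfrac12 D(A,B)-O(1)$. The Kotecký-Preiss condition already verified for~\eqref{eqConditionMain} yields an exponential tail $\sum_{\Gamma\ni\gamma_0:\,\|\Gamma\|\ge k}|w(\Gamma)|\le e^{-\eps k}$ for every polymer $\gamma_0$ and some fixed $\eps>0$ depending only on the slack in~\eqref{eqConditionMain}. Union-bounding over the polymers that can meet $A\cup N(A)$ (polynomially many in $|A|+|N(A)|$) gives $\bigl|\log(\mu_{A\cup B}/\mu_A\mu_B)\bigr|\le C(|A|+|N(A)|)\,e^{-\eps D(A,B)/2}$. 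Once $D(A,B)$ is large enough to make this at most $1$, the elementary bound $|\mu_{A\cup B}-\mu_A\mu_B|=\mu_A\mu_B\bigl|e^X-1\bigr|\le 2|X|$ with $X=\log(\mu_{A\cup B}/\mu_A\mu_B)$ converts this into~\eqref{eqSetDecay}; small values of $D(A,B)$ are absorbed into $C'$, and~\eqref{eqPointDecay} follows by specialization to $A=\{u\},B=\{v\}$.

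\emph{Main obstacle.} The most delicate part is the inclusion-exclusion bookkeeping in the second step: verifying that every cluster which does not simultaneously reach $A\cup N(A)$ and $B\cup N(B)$ cancels exactly across the four partition functions, and correctly identifying the residual trivial-factor term $\mathcal{E}$ on $N(A)\cap N(B)$. Once this is done, the exponential decay is an immediate consequence of the KP convergence already established in the proof of Theorem~\ref{thmAlgorithm}, together with the connectedness of cluster supports.
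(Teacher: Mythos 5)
There is a genuine gap at the step ``substituting the cluster expansion into all four terms.'' Your identity $\mu_S=\bigl(\prod_{v\in S}\lam_v\bigr)Z(G-S-N(S))/Z(G)$ and the resulting log-ratio are fine, but the hypothesis~\eqref{eqConditionMain} only yields the Koteck\'y--Preiss condition~\eqref{eqKP2} for the polymer model of $G$ itself, and the verification uses the minimum-degree bound $|N(\gamma)|\ge \frac{\del_R}{\Delta_L}|\gamma|$. The deleted graphs $G_A,G_B,G_{AB}$ need not lie in $\cG(\Delta_L,\del_R,\Delta_R)$: removing $A\cup N(A)$ strips $L$-neighbors from nearby $R$-vertices, so the polymer weights of the subgraph model, $\lam_R^{|\gamma|}/(1+\lam_L)^{|N_{G_A}(\gamma)|}$, can be much larger than the weights in $G$. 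In the regime where~\eqref{eqConditionMain} is satisfied with large $\lam_R$ (e.g.\ Corollary~\ref{corMain}(2), where both fugacities are large), an $R$-vertex whose whole neighborhood lies in $N(A_R)$ becomes isolated in $G_A$ and contributes the series of $\log(1+\lam_R)$ with $\lam_R>1$, so the cluster expansion for $\log Z(G_A)$ genuinely diverges; there is no uniform exponential tail to invoke. (The paper flags exactly this obstruction --- vertex deletion can leave the class $\cG(\Delta_L,\del_R,\Delta_R)$ --- when explaining why naive self-reducibility fails for sampling.) A secondary, more repairable issue: even where all four expansions converge, a cluster whose support avoids the removed sets but whose neighborhood meets them appears in several of the four terms with \emph{different} weights, so the surviving contribution is a signed combination of four weights per cluster rather than the single sum $\sum_\Gamma w(\Gamma)$ you wrote; your cancellation criterion via $\tilde\Gamma$ identifies the right surviving clusters but not the right summand.

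For contrast, the paper never expands the deleted graphs. It stays with the one polymer model on $G$, writes $\E e^{\sum_v t_vX_v}=\tilde\Xi/\Xi$ with tilted weights $\tilde w_\gamma=w_\gamma e^{\sum_{v\in\gamma}t_v}$, differentiates the cluster expansion at $t=0$ to get the joint-cumulant formula~\eqref{eqClusterKappa}, and deduces Theorem~\ref{thmSemiInvariant} from the tail bound~\eqref{eqClusterTailBound}; Lemma~\ref{lemBasisChange} (the moment--cumulant relation of Leonov--Shiryaev) then converts cumulant decay into the covariance bound, and vertices of $L$ are handled by an inclusion--exclusion over their $R$-neighborhoods, at the cost of the factor $2^{|N(A_L)|+|N(B_L)|}$ absorbed into $C'$. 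If you want to salvage your route you would need a convergence criterion for the subgraph polymer models that survives the loss of the $\del_R$ condition near $A$ and $B$, which~\eqref{eqConditionMain} does not provide; switching to the cumulant argument avoids the problem entirely.
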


In fact, Theorem~\ref{thmCorDecay} follows from a bound on the \textit{truncated $m$-point correlation functions} (or \textit{semi-invariants}, or \textit{joint cumulants}), a measure of correlation decay that arises naturally in statistical mechanics (see e.g.~\cite{malyshev1980cluster,dobrushin1987completely}).

For $A= \{ v_1, \dots, v_k \} \subseteq V(G)$, the \textit{truncated correlation function} of $A$ is defined as
\begin{align*}
\kappa(A) &= \frac{\partial} {\partial t_{v_1}} \cdots \frac{\partial}{\partial t_{v_k}} \log \E \left [ e^{\sum_{v} t_v X_v }  \right]  \Bigg |_{t_v = 0, v\in V(G)} \, ,
\end{align*}
where the expectation is over the choice of random independent set from $\mu_G$. 

In particular, we can recover the vertex marginals and the correlation between two vertices, as one can show: 
\begin{align*}
\kappa(\{v\}) &= \mu_v \,,
\end{align*}
and 
\begin{align*}
\kappa ( \{u,v \}) &= \mu_{uv} - \mu_u \mu_v  \,.
\end{align*}

For a set of vertices $A \subseteq R$, let $\text{MST}(A)$ (minimum size Steiner tree) be the smallest $k$ so that $G$ contains a connected subgraph $H$  with $k$ edges containing all vertices of $A$.  
\begin{theorem}
\label{thmSemiInvariant}
If $\Delta_L, \del_R, \Delta_R, \lam_L, \lam_R$ satisfy~\eqref{eqConditionMain}, there exists $\eps>0$ so that for any $A \subseteq R$, 
\begin{align*}
|\kappa(A)|  &\le C e^{-\eps \text{MST}(A)} \,,
\end{align*}
where $C$ depends only on $\eps$ and $|A|$.  
\end{theorem}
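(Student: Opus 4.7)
The plan is to derive the bound on $\kappa(A)$ from the cluster expansion of the polymer model used to prove Theorem~\ref{thmAlgorithm}, exploiting the standard correspondence between joint cumulants and clusters whose supports collectively span the set $A$.

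First I would reinterpret the moment generating function as a perturbed partition function: since $\E e^{\sum_v t_v X_v} = Z(G; \lam_L, \lam_R \cdot e^{t}) / Z(G; \lam_L, \lam_R)$ when $A \subseteq R$ and $t_v = t$ on $A$ (and more generally with vertex-dependent $t_v$, we obtain the multivariate analogue), the cumulant generating function $F(t) := \log \E e^{\sum_v t_v X_v}$ equals $\log Z(G; t) - \log Z(G; 0)$. Applying the polymer model of Theorem~\ref{thmAlgorithm} with these modulated fugacities, I would verify that the Kotecky--Preiss condition \eqref{eqConditionMain} remains satisfied in a complex neighborhood of $t = 0$ (since the condition is strict and polymer weights depend analytically on $t_v$). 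Consequently the cluster expansion
\begin{equation*}
F(t) \;=\; \sum_{\Gamma} \Phi(\Gamma)\,\prod_{\gamma \in \Gamma} w_\gamma(t)
\end{equation*}
converges absolutely and uniformly on that neighborhood, where $w_\gamma(t)$ depends on $t_v$ only for $v$ in the support of $\gamma$.

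Next I would compute the mixed partial derivative $\kappa(A) = \partial_{t_{v_1}}\cdots \partial_{t_{v_k}} F(t)\big|_{t=0}$ term by term. Since $w_\gamma(t)$ depends on $t_v$ only through vertices in $\gamma$'s support, the only clusters $\Gamma$ surviving after differentiating in all $v_j \in A$ and setting $t = 0$ are those for which the union of supports of polymers in $\Gamma$ contains every vertex of $A$. By the standard cluster-expansion structure, each such $\Gamma$ has a connected incompatibility graph, so its combined support is connected (in $G$) in the sense that there is a connected subgraph of $G$ containing every polymer in $\Gamma$ and therefore all of $A$. The total number of edges in this subgraph is at least $\text{MST}(A)$. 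Combining the derivative estimates (each derivative contributes a bounded factor in $\lam_R$, uniform on the neighborhood by Cauchy's formula) with the KP tail bound on clusters that touch a fixed vertex and have combined size at least $m$, I obtain an exponentially decaying estimate in $m = \text{MST}(A)$.

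Finally I would assemble the bound: for each $v \in A$, pick one polymer in $\Gamma$ containing $v$; the total cluster weight of clusters containing a given polymer and having support-size at least $m$ decays like $e^{-\eps' m}$ by the KP estimate used to prove Theorem~\ref{thmAlgorithm}. Choosing $v \in A$ as a root and summing over the spanning structures contributes only a combinatorial factor depending on $|A|$, yielding $|\kappa(A)| \le C e^{-\eps \text{MST}(A)}$ for some $\eps > 0$ smaller than that appearing in the KP convergence. I expect the main obstacle to be the clean bookkeeping for the multiple derivatives of the product $\prod_\gamma w_\gamma(t)$: one must track, via a Leibniz/partition argument, that each derivative $\partial_{t_{v_j}}$ attaches $v_j$ to some polymer support and that after setting $t=0$ only clusters whose aggregate support covers $A$ contribute. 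Once this combinatorial identity is in place, the passage from cluster size to the graph-theoretic quantity $\text{MST}(A)$ is immediate, and Theorem~\ref{thmCorDecay} follows by specializing to $A = \{u\}, \{v\}, \{u,v\}$ and using inclusion-exclusion on cumulants, together with the observation that pairs of vertices from different sides of the bipartition can be handled by extending the polymer model or by conditioning on the occupancy of one extra vertex.
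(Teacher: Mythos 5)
Your proposal follows essentially the same route as the paper's proof: tilt the polymer weights by $e^{t_v}$ (equivalently, modulate $\lambda_R$), express the cumulant via the cluster expansion as a sum over clusters whose supports cover $A$ (with multiplicity factors), and combine the Kotecky--Preiss tail bound at a fixed vertex with the observation that any such cluster has size at least a constant multiple of $\mathrm{MST}(A)$. The differences are cosmetic --- Cauchy estimates on a complex neighborhood versus direct term-by-term differentiation of the absolutely convergent series with $t_v \le \eta$, and a glossed factor of two relating cluster size (vertices in $R$) to Steiner-tree edges --- neither of which affects correctness since only some $\eps>0$ is claimed.
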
  
Theorem~\ref{thmSemiInvariant} immediately implies the bound~\eqref{eqPointDecay} in Theorem~\ref{thmCorDecay}  for $u,v \in R$ since $\text{MST}(\{u,v\}) = D(u,v)$.  We show how to derive the full conclusion of Theorem~\ref{thmCorDecay} from Theorem~\ref{thmSemiInvariant} in Section~\ref{secCorDecay}.

\subsection{Complex zeroes of the partition function}
\label{secZeroes}
The polynomial interpolation method, Barvinok's approach to approximate counting~\cite{barvinok2017combinatorics,barvinok2017weighted}, involves deducing convergence of the Taylor series for the logarithm of a univariate partition function from the existence of a zero-free region of the partition function in the complex plane.  This convergence, along with the algorithm of Patel and Regts~\cite{patel2016deterministic} for efficient computation of low-order coefficients of a partition function, leads to an FPTAS in a wide variety of approximate counting problems. 

Our approach of truncating the cluster expansion was inspired by and related to this approach.  In fact, convergence of the cluster expansion implies that the partition function does not vanish, and while our algorithmic approach does not require it (and the preceding algorithmic and probabilistic theorems are stated for positive fugacities), the \kotecky-Preiss convergence criteria works naturally with complex fugacities.  Thus we can deduce the following result on zeroes of the bivariate hard-core partition function.
\begin{theorem}
Suppose $\Lam_L, \Lam_R >0$ and 
\begin{align}
\label{eqConditionMainComplex}
6\Delta_L \Delta_R \Lam_R &\le (1+\Lam_L)^{\frac{\del_R}{\Delta_L}}   \, .
\end{align}
Then for all $G \in \cG ( \Delta_L, \del_R, \Delta_R )$ and all  $\lam_L, \lam_R \in \mathbb C$ satisfying 
\begin{align*}
| \lam_R | \le \Lam_R \\
| 1+ \lam_L| \ge 1+\Lam_L,
\end{align*}
the bivariate hard-core partition function satisfies
$$Z(G,\lam_L,\lam_R) \ne 0 \,. $$
\label{thmZero}
\end{theorem}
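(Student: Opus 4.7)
The plan is to re-use the polymer-model framework built for the real-fugacity results. The \kotecky-Preiss criterion only involves absolute values of polymer weights, so the analysis transfers from positive $\lam_L,\lam_R$ satisfying~\eqref{eqConditionMain} to complex $\lam_L,\lam_R$ dominated (in the appropriate sense) by $\Lam_L,\Lam_R$ satisfying~\eqref{eqConditionMainComplex}.

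First, I split every independent set $I$ according to its trace $I_R = I \cap R$: once $I_R$ is fixed, any subset of $L \setminus N(I_R)$ completes it to an independent set. Summing the $L$-part yields
\begin{equation*}
Z(G,\lam_L,\lam_R) \;=\; (1+\lam_L)^{|L|}\,\Xi(G), \qquad \Xi(G) \;=\; \sum_{S\subseteq R}\lam_R^{|S|}(1+\lam_L)^{-|N(S)|}.
\end{equation*}
The weight $w(S) = \lam_R^{|S|}(1+\lam_L)^{-|N(S)|}$ factorizes over sets whose $L$-neighborhoods are disjoint, so $\Xi(G)$ is an abstract polymer partition function whose polymers are the subsets of $R$ that are connected in the auxiliary graph $H$ on $R$ in which $u \sim v$ iff $N_G(u)\cap N_G(v)\ne\emptyset$, and whose compatibility relation is disjointness of $L$-neighborhoods.

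Next, for $|\lam_R|\le \Lam_R$ and $|1+\lam_L|\ge 1+\Lam_L$ we have $|w(\gamma)| \le \Lam_R^{|\gamma|}(1+\Lam_L)^{-|N(\gamma)|}$. Since each vertex of $R$ has at least $\del_R$ neighbors in $L$ and each $L$-vertex has degree at most $\Delta_L$, one has $|N(\gamma)| \ge \del_R|\gamma|/\Delta_L$. Combined with~\eqref{eqConditionMainComplex} this gives the uniform bound
\begin{equation*}
|w(\gamma)| \;\le\; \left(\frac{\Lam_R}{(1+\Lam_L)^{\del_R/\Delta_L}}\right)^{|\gamma|} \;\le\; (6\Delta_L\Delta_R)^{-|\gamma|},
\end{equation*}
which is precisely the bound already obtained for the real-fugacity case under~\eqref{eqConditionMain}.

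Finally, I invoke the \kotecky-Preiss verification that underlies Theorem~\ref{thmAlgorithm}. The auxiliary graph $H$ has maximum degree at most $\Delta_L\Delta_R$, so the number of connected subsets of $R$ of size $k$ containing a given vertex grows only exponentially in $k$; combined with the weight estimate above, the constant $6$ in~\eqref{eqConditionMainComplex} is exactly what the \kotecky-Preiss sum requires to close. Convergence of the cluster expansion then yields an absolutely convergent series for $\log\Xi(G)$, so $\Xi(G)\ne 0$, and since $|1+\lam_L|\ge 1+\Lam_L > 0$ we conclude $Z(G,\lam_L,\lam_R)\ne 0$. The essentially only point needing comment is that the real-fugacity verification uses only $|w(\gamma)|$ and is therefore insensitive to the phases of $\lam_L,\lam_R$; this is a routine observation, which makes the complex statement a corollary of the real one rather than a serious new obstacle.
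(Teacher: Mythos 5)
Your proposal is correct and follows essentially the same route as the paper: write $Z(G,\lam_L,\lam_R)=(1+\lam_L)^{|L|}\,\Xi(\cP)$ via the polymer decomposition over $2$-linked subsets of $R$, dominate the complex polymer weights in absolute value by the real weights at $\Lam_L,\Lam_R$ using $|\lam_R|\le\Lam_R$, $|1+\lam_L|\ge 1+\Lam_L$ and $|N(\gamma)|\ge \del_R|\gamma|/\Delta_L$, and then reuse the Koteck\'y--Preiss verification from the proof of Theorem~\ref{thmAlgorithm}, which only involves $|w_\gamma|$, to conclude that the cluster expansion converges and hence $\Xi(\cP)\ne 0$ and $Z\ne 0$. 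This matches the paper's proof of Theorem~\ref{thmZero} in all essentials.
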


\noindent Theorem~\ref{thmZero} follows from the proof of Theorem~\ref{thmAlgorithm} with $\lambda_L, \lambda_R$ replaced with $\Lambda_L, \Lambda_R$; we give the details in Section~\ref{secCluster}.

\subsection{Discussion}
\label{secDiscuss}

The cluster expansion is a {\it perturbative} technique, based on expressing a partition function in terms of deviations from a simple, easy to understand \textit{ground state}.  In high temperature (small $\lam$) regimes, this ground state is the empty independent set.  Barvinok's polynomial interpolation method is also a perturbative technique, and interpolating from $\lam=0$ is akin to measuring deviations from the empty independent set.  

On the other hand, previous algorithmic applications of the cluster expansion in low temperature (large $\lam$) regimes~\cite{helmuth2018contours,JenssenAlgorithmsSODA} considered systems with multiple ground states, e.g. the all $L$ or all $R$ occupied independent sets for the hard-core model on a bipartite graph.  A necessary first step in these cases is to show that the partition function of the entire system is well approximated by the sum of partition functions representing deviations from each ground state. 
 The asymmetric setting of this paper is in fact a simpler low-temperature case than those previously considered cases.  Condition~\eqref{eqConditionMain} ensures sufficient asymmetry in a bipartite graph that the hard-core model can be expressed in terms of deviations from a single ground state; this ground state is dominant enough that $Z(G)$ is well approximated by small deviations from it. This simplifies the general low-temperature argument, as the first approximation step showing $Z(G)$ is well-approximated by the sum of multiple partition functions is not necessary.
 
 It would be interesting to see if other known methods for approximate counting -- Markov chain Monte Carlo or the correlation decay method -- can be used to obtain an FPTAS for a similar range of parameters as we do here.

\section{Convergence of the cluster expansion}
\label{secCluster}

Following~\cite{helmuth2018contours,JenssenAlgorithmsSODA}, our algorithms will be based on approximating the partition function of a \textit{polymer model}~\cite{gruber1971general,kotecky1986cluster} using the cluster expansion (for a textbook introduction to both polymer models and the cluster expansion see Chapter 5 of~\cite{friedli2017statistical}).

A polymer model consists of a set $\cP$ of abstract objects we call \textit{polymers}.  Each polymer $\gamma$ is equipped with a real or complex-valued weight $w_\gamma$, and there is a symmetric compatibility relation on $\cP$; we write $\gamma \sim \gamma'$ if $\gamma$ and $\gamma'$ are compatible, and $\gamma \nsim \gamma'$ if they are incompatible.  We require that $\gamma \nsim \gamma$ for all $\gamma \in \cP$.  

The partition function of a polymer model is
\begin{align*}
\Xi(\cP)&= \sum_{\substack{\Gamma \subseteq \cP \\ \text{compatible}}} \prod_{\gamma \in \Gamma} w_{\gamma} 
\end{align*}
where the sum is over all pairwise compatible collections of polymers (the empty collection contributes $1$ to the sum).   If the weights $w_\gamma$ are real and non-negative then we can define a probability measure $\nu$ on $\Omega$, with
\begin{align}
\label{eqNu}
\nu(\Gamma) &= \frac{  \prod_{\gamma \in \Gamma} w_{\gamma}  }{ \Xi(\cP)}  \, .
\end{align}

The cluster expansion is an infinite series representation of $\log  \Xi (\cP)$. A  \textit{cluster} $\Gamma$ is an ordered multiset of polymers from $\cP$ whose \textit{incompatibility graph} $H(\Gamma)$ (a vertex for every polymer and an edge between each pair of incompatible polymers) is connected.  Denote by $\cC$ the set of all clusters from $\cP$.   As a formal power series in the polymer weights, the cluster expansion is
\begin{align*}
\log \Xi (\cP) &= \sum_{\Gamma \in \cC} w(\Gamma) \, ,
\end{align*}
where
\begin{align*}
w(\Gamma) &= \phi(H(\Gamma)) \prod_{\gamma \in \Gamma} w_{\gamma} \, ,
\end{align*}
and the function $\phi(H)$ is the \textit{Ursell function}
\begin{align*}
\phi(H) &= \frac{1}{|V(H)|!} \sum_{\substack{ A \subseteq E(H) \\ \text{spanning, connected} }} (-1)^{|A|} \,.
\end{align*}

A sufficient condition for the convergence of the cluster expansion is given by a theorem of \kotecky{ }and Preiss~\cite{kotecky1986cluster}.    
\begin{theorem}
\label{thmKP}
Let $a : \cP \to [0,\infty)$ and $b: \cP \to [0,\infty)$ be two given functions and suppose that for all $\gamma \in \cP$, 
\begin{align}
\label{eqKPcond1}
\sum_{\gamma' \nsim \gamma}  |w_{\gamma'}| e^{a(\gamma') + b(\gamma')}  &\le a (\gamma) \,,
\end{align}
then the cluster expansion converges absolutely, and, moreover, for all $\gamma \in \cC$ we have
\begin{equation}
\label{eqKPtail1}
\sum_{\substack{\Gamma \in \cC :\\ \exists \gamma' \in \Gamma, \gamma' \nsim \gamma }} \left |  w(\Gamma) \right| e^{b( \Gamma)} \le a(\gamma) \,,
\end{equation}
where 
\begin{align*}
b(\Gamma) = \sum_{\gamma \in \Gamma} b(\gamma) \,.
\end{align*}
\end{theorem}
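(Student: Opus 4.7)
The plan is to invoke Penrose's tree-graph identity to reduce the proof to a clean induction on rooted labeled trees of polymers. The starting point is the bound $|\phi(H)| \le \tau(H)/|V(H)|!$, where $\tau(H)$ denotes the number of spanning trees of the connected graph $H$. Substituting this into the definition of $w(\Gamma)$ replaces the signed cancellation inside the Ursell function by a manifestly nonnegative sum, at the cost of introducing a sum over an ordering of the polymers in $\Gamma$ and a choice of spanning tree of their incompatibility graph. The $1/|V(H(\Gamma))|!$ factor exactly cancels the labeling redundancy, so the left-hand side of (\ref{eqKPtail1}) is bounded by a sum over labeled trees whose vertices are polymers in $\cP$, whose edges connect incompatible pairs, whose vertex weights are $|w_{\gamma'}|e^{b(\gamma')}$, and at least one of whose polymers is incompatible with $\gamma$.

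Next, I would root each such tree at a polymer $\gamma' \nsim \gamma$ and prove by induction on subtree size that, for any polymer $\gamma_p$, the total weight of all finite rooted labeled subtrees rooted at $\gamma_p$ (with edges enforcing incompatibility and vertex weights $|w|\,e^{b}$) is bounded by $e^{a(\gamma_p)}$. The inductive step works as follows: if $\gamma_p$ has $k$ unordered children in the subtree, summing each child's contribution gives
\[ \sum_{\gamma'' \nsim \gamma_p} |w_{\gamma''}|e^{b(\gamma'')} \cdot (\text{subtree total rooted at } \gamma'') \le \sum_{\gamma'' \nsim \gamma_p} |w_{\gamma''}|e^{a(\gamma'')+b(\gamma'')} \le a(\gamma_p), \]
by the inductive hypothesis and by (\ref{eqKPcond1}). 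Dividing by $k!$ for the unordered children and summing over $k \ge 0$ produces $e^{a(\gamma_p)}$, closing the induction. Applying this bound at the root shows the full expression is at most $\sum_{\gamma' \nsim \gamma} |w_{\gamma'}|e^{a(\gamma')+b(\gamma')} \le a(\gamma)$, which is exactly (\ref{eqKPtail1}).

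The main obstacle is the combinatorial bookkeeping that converts the cluster sum, with its ordered multiset structure, Ursell function, and spanning tree choices, into the unordered rooted-tree sum that the induction controls; one has to verify that the factorials from the Ursell bound, the unordering of children at each node, and the Taylor series defining $e^{a(\gamma_p)}$ all line up correctly. Once this identity is in place, the argument is driven entirely by (\ref{eqKPcond1}), where the otherwise mysterious factor $e^{a(\gamma')}$ is precisely what is consumed to pay for the subtree weights hanging off $\gamma'$. Absolute convergence of the full series $\sum_{\Gamma \in \cC} w(\Gamma)$ is then obtained as a byproduct, either by rerunning the same estimate without the incompatibility constraint with $\gamma$ or by applying (\ref{eqKPtail1}) with an auxiliary polymer incompatible with everything.
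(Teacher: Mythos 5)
The paper does not prove Theorem~\ref{thmKP} at all: it is quoted as a black box from \kotecky{}--Preiss~\cite{kotecky1986cluster}, so there is no in-paper argument to compare against. Your proposal is a correct outline of the standard modern proof of this criterion via the Penrose tree-graph inequality (the route taken, e.g., in Chapter 5 of~\cite{friedli2017statistical}), which is genuinely different from \kotecky{} and Preiss's original inductive argument on the polymer system. The skeleton is sound: $|\phi(H)|\le \tau(H)/|V(H)|!$ with $\tau$ the number of spanning trees turns the cluster sum into a nonnegative sum over labeled trees with incompatibility constraints only along tree edges, and the induction you describe, in which~\eqref{eqKPcond1} converts $\sum_{\gamma''\nsim\gamma_p}|w_{\gamma''}|e^{a(\gamma'')+b(\gamma'')}$ into $a(\gamma_p)$ and the $k$ unordered children reassemble into $e^{a(\gamma_p)}=\sum_k a(\gamma_p)^k/k!$, is exactly the right mechanism. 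Two points deserve more care than your sketch gives them. First, ``induction on subtree size'' should be an induction on a truncation parameter $N$ (the claim being that the sum over rooted trees with at most $N$ non-root vertices is $\le e^{a(\gamma_p)}$, followed by monotone convergence), since the full rooted-tree sum is not a priori finite. Second, passing from the constraint ``some $\gamma'\in\Gamma$ is incompatible with $\gamma$'' in~\eqref{eqKPtail1} to a sum over trees \emph{rooted at} such a $\gamma'$ requires a union bound over which of the $n$ vertices serves as the root; the resulting factor of $n$ is precisely what converts the Ursell normalization $1/n!$ into the $1/(n-1)!$ needed to label the remaining vertices, and this is where your ``bookkeeping'' must be checked. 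Neither point is a gap in the approach, only in the level of detail. Your closing remark on absolute convergence is also fine: every cluster contains some polymer $\gamma'$ with $\gamma'\nsim\gamma'$, so $\sum_{\Gamma\in\cC}|w(\Gamma)|\le\sum_{\gamma\in\cP}a(\gamma)<\infty$ for the finite polymer sets used in this paper.
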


Now we define a polymer model representation of the hard-core model on a bipartite graph and use the \kotecky-Preiss condition to obtain our algorithmic and correlation decay results.
We fix bipartite graph $G$ with bipartition $(L,R)$ and respective minimum and maximum degrees $\del_L, \Delta_L$ and $\delta_R, \Delta_R$.  Fix also activities $\lam_L$ and $\lam_R$ for vertices in $L$ and $R$ respectively. 

We define a polymer $\gamma$ to be a $2$-linked subset of $R$; that is, $\gamma \subseteq R$ is connected in the graph $G^2$ in which each vertex in $V(G)$ is joined to all vertices within distance $2$ of it in $G$.  Let $\cP=\cP(G)$ be the set of all polymers of $G$.  Two polymers $\gamma, \gamma'$ are compatible if $\gamma \cup \gamma'$ is not $2$-linked and incompatible otherwise.  In particular, as required, $\gamma \nsim \gamma$ for every polymer $\gamma$.   For each polymer $\gamma$ we define its weight function to be 
\begin{align*}
w_\gamma = \frac{\lam_R^{|\gamma|}}{(1+\lam_L)^{|N(\gamma)|}} \,.
\end{align*}
Let $\Xi (\cP)$ be the corresponding polymer model partition function.  This polymer model is almost the same as the polymer models used in~\cite{JKP2,liao2019counting} to approximate the hard-core partition function on random regular bipartite graphs; the difference being that in those cases, two polymer models were defined, one representing $L$-dominant independent sets and one representing $R$-dominant independent sets, and the sum of their partition functions was shown to be a good approximation of $Z(G)$.  Here in our asymmetric setting, one polymer model suffices, and in fact the polymer model partition function $\Xi(\cP)$ is, up to scaling, {exactly} the hard-core partition function.  

\begin{lemma}
\label{lemPPart}
\begin{align}
\label{eqPartitionFunc}
Z(G) &= (1+\lam_L)^{|L|} \Xi(\cP) \, .
\end{align}
\end{lemma}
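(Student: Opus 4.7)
The plan is to prove this by directly expanding $Z(G)$ as a sum over independent sets, decomposing each independent set into its intersections with $L$ and $R$, and then recognizing the result as the polymer partition function $\Xi(\cP)$.

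First, I would split an independent set $I$ as $I = I_L \cup I_R$ with $I_L = I \cap L$ and $I_R = I \cap R$. Since $G$ is bipartite, $I$ is independent iff $I_R \subseteq R$ is arbitrary and $I_L \subseteq L \setminus N(I_R)$. Summing out the $I_L$ contribution gives
\begin{align*}
Z(G) \;=\; \sum_{I_R \subseteq R} \lam_R^{|I_R|} (1+\lam_L)^{|L \setminus N(I_R)|} \;=\; (1+\lam_L)^{|L|} \sum_{I_R \subseteq R} \frac{\lam_R^{|I_R|}}{(1+\lam_L)^{|N(I_R)|}}.
\end{align*}

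Next, I would decompose each $I_R \subseteq R$ into its maximal $2$-linked components $\gamma_1,\dots,\gamma_k$. Each $\gamma_i$ is by definition a polymer, and this gives a bijection between subsets of $R$ and pairwise compatible families of polymers. The crucial observation is that two vertices $u,v \in R$ lie in the same $2$-linked component iff there is a path $u=r_0,r_1,\dots,r_m=v$ in $R$ with consecutive vertices sharing an $L$-neighbor. Hence, for polymers $\gamma \neq \gamma'$ arising from distinct components, we have $N(\gamma) \cap N(\gamma') = \emptyset$, so $|N(I_R)| = \sum_i |N(\gamma_i)|$ and $|I_R| = \sum_i |\gamma_i|$. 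The weight therefore factors multiplicatively across the components, and the sum above becomes exactly $\sum_{\Gamma \text{ compatible}} \prod_{\gamma \in \Gamma} w_\gamma = \Xi(\cP)$, yielding \eqref{eqPartitionFunc}.

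The only step that requires a bit of care is verifying that the decomposition into $2$-linked components really does give disjoint $L$-neighborhoods: if $\gamma, \gamma'$ are compatible polymers in this decomposition but shared an $L$-neighbor $w$, then picking $u \in \gamma$ and $v \in \gamma'$ both adjacent to $w$ would give $u,v$ at distance $2$ in $G$, forcing $\gamma \cup \gamma'$ to be $2$-linked, a contradiction. Once this is noted, the factorization of the weights and the bijection together give the identity immediately. I do not expect any real obstacle; the lemma is essentially a repackaging of the standard combinatorial identity, and the proof is entirely syntactic once the polymer model is set up as above.
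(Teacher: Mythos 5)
Your proposal is correct and follows essentially the same route as the paper: decompose $I \cap R$ into its maximal $2$-linked components, observe these are pairwise compatible polymers with disjoint $L$-neighborhoods, and sum out the free vertices of $L$ to factor the weight as $(1+\lam_L)^{|L|}\prod_i w_{\gamma_i}$. Your explicit check that compatible components have disjoint neighborhoods (and the bijection between subsets of $R$ and compatible polymer families) is exactly the point the paper's proof uses implicitly.
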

\begin{proof}
   To see this, consider an independent set $I \in \cI (G)$, and decompose $I \cap R$ into its maximal $2$-linked components, call these $\gamma_1, \dots, \gamma_k$.  Each of these is a polymer since it is $2$-linked, and they are pairwise compatible since each is maximal.   Moreover, the number of vertices in $L$ that are blocked from being in an independent set by a vertex in $I \cap R$ is $\sum_{i=1}^k |N(\gamma_i)|$, and so if we sum over all independents sets $I'$ so that $I'\cap R = I \cap R$, the contribution to $Z(G)$ is $(1+\lam_L)^{|L|} \prod_{i=1}^k \frac{ \lam_R^{|\gamma_i|}}{(1+\lam_L)^{|N(\gamma_i)|}}$. Summing over all possible sets $I \cap R$, we obtain~\eqref{eqPartitionFunc}.  
\end{proof}

Moreover, as the proof of Lemma~\ref{lemPPart} indicates, if we let $\nu$ be the probability measure on sets of compatible polymers defined by~\eqref{eqNu}, then it is easy to recover a sample $\mathbf I$  from $\mu_G$ given a sample $\mathbf \Gamma$ from $\nu$: include all vertices of the polymers of $\mathbf \Gamma$ in $\mathbf I$ and in addition, for each vertex $v \in L$ that is not blocked from being in $\mathbf I$ by one of these vertices, include it in $\mathbf I$ independently with probability $\frac{\lam}{1+\lam}$.

Next we turn to using the cluster expansion algorithmically. 
We define the size of a cluster $\Gamma$ of polymers as the sum of their sizes:
\begin{align*}
|\Gamma| = \sum_{\gamma \in \Gamma} |\gamma| \,.
\end{align*}
With this definition, we can define a truncated cluster expansion that only sums over clusters up to a certain size which we will use for the approximate counting algorithm.
\begin{align*}
T_m(\cP) \coloneqq  \sum_{\substack{\Gamma \in \cC \\ |\Gamma| < m}} w(\Gamma) \,.
\end{align*}
We can then ask when $T_m(\cP)$ is a good additive approximation of $\log \Xi(\cP)$.

\begin{lemma}
\label{lemKPapplied}
Suppose that for some $\eta>0$ and every $v \in R$,
\begin{align}
\label{eqKP2}
\sum_{\gamma \ni v} |w_{\gamma}| e^{(1/2+\eta) |\gamma|} &\le \frac{1}{2 (\Delta_R (\Delta_L-1) +1)} \, ,
\end{align}
then the cluster expansion converges absolutely, the cluster weights satisfy
\begin{align}
\label{eqClusterTailBound}
\sum_{\substack{ \Gamma \in \cC \\ \Gamma \ni v}} | w(\Gamma)| e^{\eta |\Gamma|} &\le 1 \,,
\end{align}
and, in particular,
\begin{align}
\label{eqClusterTruncateBound}
\left|  T_m (\cP) - \log  \Xi (\cP) \right |  &\le   |R| e^{-m \eta } \,.
\end{align}
\end{lemma}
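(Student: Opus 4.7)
The plan is to apply the Kotecký-Preiss criterion (Theorem~\ref{thmKP}) with the natural choices $a(\gamma) = |\gamma|/2$ and $b(\gamma) = \eta |\gamma|$, so that $a(\gamma') + b(\gamma') = (1/2 + \eta)|\gamma'|$, which matches the exponent in the hypothesis~\eqref{eqKP2}. The only real geometric input needed is a count of the vertices in $R$ that could ``block'' compatibility with $\gamma$: two polymers $\gamma, \gamma' \subseteq R$ are incompatible iff some $v' \in \gamma'$ lies within graph distance $2$ of some $v \in \gamma$, and since a single $v \in R$ has at most $\Delta_R$ neighbors in $L$, each of which has at most $\Delta_L - 1$ other neighbors in $R$, the number of $R$-vertices within distance $2$ of a given vertex of $R$ is at most $\Delta_R(\Delta_L - 1) + 1$. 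Thus the second neighborhood of $\gamma$ in $R$ has size at most $|\gamma|(\Delta_R(\Delta_L-1)+1)$.

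Using this together with the hypothesis~\eqref{eqKP2} gives
\begin{align*}
\sum_{\gamma' \nsim \gamma} |w_{\gamma'}| e^{(1/2+\eta)|\gamma'|}
&\le \sum_{v' \in R : d(v', \gamma) \le 2} \sum_{\gamma' \ni v'} |w_{\gamma'}| e^{(1/2+\eta)|\gamma'|} \\
&\le |\gamma|(\Delta_R(\Delta_L-1)+1) \cdot \frac{1}{2(\Delta_R(\Delta_L-1)+1)} = \frac{|\gamma|}{2} = a(\gamma),
\end{align*}
which verifies~\eqref{eqKPcond1} and yields absolute convergence of the cluster expansion together with the KP tail bound~\eqref{eqKPtail1}.

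To deduce~\eqref{eqClusterTailBound}, I would apply the KP tail bound to the single-vertex polymer $\gamma = \{v\}$ (a singleton is trivially $2$-linked, hence a polymer, and is incompatible with any polymer containing $v$, in particular with itself). Any cluster $\Gamma$ containing the vertex $v$ contains a polymer $\gamma' \ni v$, and such $\gamma'$ is incompatible with $\{v\}$, so
\begin{align*}
\sum_{\Gamma \ni v} |w(\Gamma)| e^{\eta |\Gamma|}
\le \sum_{\substack{\Gamma : \exists \gamma' \in \Gamma,\, \gamma' \nsim \{v\}}} |w(\Gamma)| e^{b(\Gamma)}
\le a(\{v\}) = \tfrac{1}{2} \le 1.
\end{align*}

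Finally, for the truncation bound~\eqref{eqClusterTruncateBound}, every cluster $\Gamma$ contains at least one vertex of $R$, so a union bound over $v \in R$ gives
\begin{align*}
|T_m(\cP) - \log \Xi(\cP)|
\le \sum_{\substack{\Gamma \in \cC \\ |\Gamma| \ge m}} |w(\Gamma)|
\le \sum_{v \in R} \sum_{\substack{\Gamma \ni v \\ |\Gamma| \ge m}} |w(\Gamma)|
\le e^{-\eta m} \sum_{v \in R} \sum_{\Gamma \ni v} |w(\Gamma)| e^{\eta |\Gamma|}
\le |R| e^{-\eta m},
\end{align*}
using that $|\Gamma| \ge m$ permits inserting the factor $e^{\eta(|\Gamma| - m)} \ge 1$, followed by~\eqref{eqClusterTailBound}. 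The only place any thought is required is the distance-$2$ counting argument in the first paragraph; everything else is a direct application of KP and a standard telescoping trick, so I do not anticipate a serious obstacle.
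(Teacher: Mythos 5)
Your proof is correct, and the first and last steps (verifying \eqref{eqKPcond1} with $a(\gamma)=|\gamma|/2$, $b(\gamma)=\eta|\gamma|$ via the distance-$2$ count, and the truncation bound via $e^{-\eta m}$ and a union bound over $v\in R$) coincide with the paper's argument. The one place you diverge is the derivation of \eqref{eqClusterTailBound}: the paper adjoins to $\cP$ an auxiliary ``ghost'' polymer $\gamma_v$ of weight $0$ for each $v\in R$, declared incompatible exactly with the polymers containing $v$, sets $a(\gamma_v)=1$, $b(\gamma_v)=0$, re-verifies the \kotecky--Preiss hypothesis for the augmented system (which requires checking $\sum_{\gamma\ni v}|w_\gamma|e^{(1/2+\eta)|\gamma|}\le 1$, implied by \eqref{eqKP2}), and then reads off \eqref{eqKPtail1} for $\gamma_v$. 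You instead apply \eqref{eqKPtail1} directly to the genuine singleton polymer $\{v\}\in\cP$, using that any cluster containing $v$ has a polymer $\gamma'\ni v$, and $\gamma'\cup\{v\}=\gamma'$ is $2$-linked, so $\gamma'\nsim\{v\}$; this is a valid containment of index sets, and since all summands in \eqref{eqKPtail1} are nonnegative the bound $a(\{v\})=1/2\le 1$ follows. Your route is slightly cleaner --- no augmentation and no re-verification of the hypothesis --- and even gives the marginally stronger constant $1/2$; the only thing the ghost-polymer device buys in general is flexibility (one can tailor the incompatibility class and the value of $a$ at the ghost independently of the geometry of actual polymers, which matters in settings where no small genuine polymer pinned at $v$ is available), but in this model the singleton does the job.
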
 
\begin{proof}
  We will apply Theorem~\ref{thmKP} with the functions $a(\gamma) = |\gamma|/2$ and $b(\gamma) = \eta |\gamma|$.  
Because any polymer incompatible with $\gamma$ must include a vertex of $\gamma$ or a vertex at distance 2 from $\gamma$, for our polymer model we see that
\begin{align*}
\sum_{\gamma' \not\sim \gamma} |w_{\gamma'}| e^{ a(\gamma') + b(\gamma')}  \leq \sum_{\substack{v \in R\\ D(v,\gamma) \leq 2}} \sum_{\gamma' \ni v} | w_{\gamma'} | e^{ (1/2 + \eta) |\gamma'|}.
\end{align*}
 The number of vertices in $R$ with $D(v,\gamma) \leq 2$ is at most $(\Delta_R(\Delta_L-1) + 1) |\gamma|$, and so
 \begin{align*}
\sum_{\gamma' \not\sim \gamma} |w_{\gamma'}| e^{ a(\gamma') + b(\gamma')}  &\leq (\Delta_R(\Delta_L-1) + 1) |\gamma| \max_{v \in R} \sum_{\gamma' \ni v} | w_{\gamma'} | e^{ (1/2 + \eta) |\gamma'|} \\
&\le | \gamma|/2  = a(\gamma) \,.
\end{align*}
by~\eqref{eqKP2}, giving~\eqref{eqKPcond1}. Thus by Theorem~\ref{thmKP} the cluster expansion for $\log \Xi(\cP)$ converges absolutely.

	To obtain the bound~\eqref{eqClusterTailBound}, we augment the set of polymer we consider and use Theorem~\ref{thmKP} again. We adjoin to $\cP$ a polymer $\gamma_v$ for each $v \in R$, with $w_{\gamma_v} =0$ and $\gamma_v \nsim \gamma$ for each $\gamma \ni v$. We set $a(\gamma_v) =1$ and  $b(\gamma_v) =0$. To verify the \kotecky{}-Preiss condition~\eqref{eqKPcond1} still holds for $\cP$ with these new polymers added, we must also verify that for all $v \in R$,	
	\begin{align*}
\sum_{\gamma \ni v}  |w_{\gamma}| e^{(1/2+\eta) |\gamma|} &\le 1 \, .\end{align*}
This is implied by~\eqref{eqKP2}.  
Using conclusion~\eqref{eqKPtail1} of Theorem~\ref{thmKP} for the adjoined polymers, we see that for all $v \in R$, 
\begin{align*}
\sum_{\substack{ \Gamma \in \cC \\ \Gamma \ni v}} |w(\Gamma)| e^{\eta |\Gamma|} 
\le \sum_{\substack{\Gamma \in \cC :\\ \exists \gamma' \in \Gamma, \gamma' \nsim \gamma_v }} \left |  w(\Gamma) \right| e^{b( \Gamma)} \le a(\gamma_v) = 1 \,.  
\end{align*}

	The bound~\eqref{eqClusterTruncateBound} follows from~\eqref{eqClusterTailBound} by  observing that
	\begin{align}
	\label{eqGamvBound}
\sum_{\substack{ \Gamma \in \cC \\ \Gamma \ni v \\ |\Gamma| \ge m}} | w(\Gamma)|  
&\le  e^{-\eta m} \sum_{\substack{ \Gamma \in \cC \\ \Gamma \ni v \\ |\Gamma| \ge m}} | w(\Gamma)| e^{\eta |\Gamma|}
\le e^{-\eta m} \,,
\end{align}
	and then summing over all $v \in R$. 
	\end{proof}

With this sufficient condition for the convergence of the cluster expansion for $\log \Xi (\cP)$ we can prove Theorem~\ref{thmAlgorithm}. 

\begin{proof}[Proof of Theorem~\ref{thmAlgorithm}]
	Let $G \in \cG(\Delta_L, \delta_R, \Delta_R)$, and let $n = |R|$. Let $\lambda_L$ and $\lambda_R$ be such that~\eqref{eqConditionMain} is satisfied.   Let $\cP = \cP(G)$ be the set of polymers defined above (without the additional polymers $\gamma_v$ that were adjoined in the previous proof).
	
	We first consider the approximate counting algorithm.  By Lemma~\ref{lemPPart}, to approximate $Z(G)$ it suffices to obtain an $\eps$-relative approximation to $\Xi(\cP)$; we can then get an $\eps$-relative approximation to $Z(G)$ by multiplying by $(1+\lambda_L)^{|L|}$. 

The bound~\eqref{eqClusterTruncateBound} suggests an algorithm for approximating $\Xi(\cP)$ when~\eqref{eqKP2} holds by calculating $T_m(\cP)$ for $m = \log(n/\eps)/\eta$ and exponentiating. This can be done as follows:
\begin{enumerate}
	\item Enumerate all clusters $\Gamma \in \cC$ with $|\Gamma| < m$; call the list of such clusters $\cC_m$.
	\item For each cluster $\Gamma \in \cC_m$, compute $\phi(H(\Gamma))$ and $\prod_{\gamma \in \Gamma} w_\gamma$. 
	\item Compute $T_m(\cP)$ by summing: 
	\[ T_m(\cP) = \sum_{\substack{\Gamma \in \cC_m}} \phi(H(\Gamma)) \prod_{\gamma \in \Gamma} w_\gamma\,.\] 
	\item Output $\exp(T_m(\cP))$. 
\end{enumerate}
It is shown in~\cite{JKP2}, using ideas and tools from~\cite{helmuth2018contours,patel2016deterministic,bjorklund2008computing}, that this algorithm can be implemented with running time $O \left(n \cdot (n/\eps)^{O (\log (\Delta_L \Delta_R)/\eta    )}   \right)$, which for $\Delta_L, \Delta_R$ fixed is polynomial in $n$ and $1/\eps$. 

What remains is to show that~\eqref{eqKP2} holds for some $\eta>0$.  
By double counting the edge boundary of a polymer $\gamma$, we have that $|N(\gamma)| \geq \frac{\delta_R}{\Delta_L} |\gamma|$, and so 
\begin{align*}
|w_\gamma| &\le \left( \frac{\lam_R}{(1+\lam_L)^{\frac{\delta_R}{\Delta_L}}} \right) ^{|\gamma|} \,.
\end{align*}

Then using the fact from~\cite{BorgsChayesKahnLovasz} that the number of $2$-linked subsets of $R$ containing a fixed vertex $v$ of size $k$ is at most $ \frac{ (e\Delta_R(\Delta_L-1))^{k-1}}{k^{3/2}}$, we have 
\begin{align*}
\sum_{\gamma \ni v}  |w_\gamma| e^{(1/2+ \eta)|\gamma|}
&\leq \sum_{k = 1}^{\infty} \frac{ (e\Delta_R(\Delta_L-1))^{k-1}}{k^{3/2}} \left( \frac{\lam_R}{(1+\lam_L)^{\frac{\delta_R}{\Delta_L}}} \right) ^{k} e^{(1/2 + \eta)k}\\
&\le \sum_{k = 1}^{\infty} \frac{ (e(\Delta_R(\Delta_L-1)+1))^{k-1}}{k^{3/2}} \left( \frac{\lam_R}{(1+\lam_L)^{\frac{\delta_R}{\Delta_L}}} \right) ^{k} e^{(1/2 + \eta)k}\\
&= \frac{1}{e (\Delta_R (\Delta_L-1)+1)} \sum_{k = 1}^{\infty}  \frac{1}{k^{3/2}}\left( \frac{(\Delta_R (\Delta_L-1)+1) \lambda_R \, e^{3/2 +\eta}}{(1+\lambda_L)^{\frac{\delta_R}{\Delta_L}} } \right)^k.
\end{align*}
Thus it suffices to show that 
\begin{align*}
 \sum_{k = 1}^{\infty}  \frac{1}{k^{3/2}}\left( \frac{(\Delta_R (\Delta_L-1) +1)\lambda_R e^{3/2 +\eta}}{(1+\lambda_L)^{\frac{\delta_R}{\Delta_L}} } \right)^k &\le \frac{e}{2} \,.
\end{align*}
Since $\sum_{k\ge 1} s^k/k^{3/2} < e/2$ for $0\le s\le.832$, 
it is enough to show  that 
 \begin{align*}
 \frac{(\Delta_R (\Delta_L-1) +1)\lambda_R e^{3/2 +\eta}}{(1+\lambda_L)^{\frac{\delta_R}{\Delta_L}} }\leq .832 \,,
\end{align*}
or 
 \begin{align*}
 \frac{(\Delta_R (\Delta_L-1) +1)\lambda_R e^{\eta}}{(1+\lambda_L)^{\frac{\delta_R}{\Delta_L}} }\leq \frac{.832}{e^{3/2}} \approx .1856  \,\,.
\end{align*}
In particular, if~\eqref{eqConditionMain} holds then
\begin{align*}
 \frac{(\Delta_R (\Delta_L-1) +1)\lambda_R}{(1+\lambda_L)^{\frac{\delta_R}{\Delta_L}} }\leq \frac{1}{6} \,,
\end{align*}
and so we can obtain~\eqref{eqKP2} with $\eta =.1 < \log (6 \cdot .1856) $.  Applying the algorithm of~\cite{JKP2} gives the FPTAS for $Z(G)$.   

Next we turn to approximate sampling.  The usual approach to using an approximate counting algorithm to obtain an approximate sampling algorithm is via self-reducibility~\cite{jerrum1986random}.  We cannot directly apply this in our setting, however, as reducing a graph $G \in \cG(\Delta_L, \delta_R, \Delta_R)$ might result in a graph $G' \notin \cG(\Delta_L, \delta_R, \Delta_R)$ because of the minimum degree condition.  Instead we can apply self-reducibility on the level of polymers, using~\cite[Theorem 5.1]{helmuth2018contours}, which provides a general reduction of approximate sampling to approximate counting for polymer models.  The high-level idea of the algorithm, applied to our setting, is to create a compatible polymer configuration $\Gamma$ one polymer at a time, with a final distribution close to $\nu$ given in~\eqref{eqNu}. Once we have such a configuration $\Gamma$ we can extend this to an independent set $I \in \cI(G)$ as described in the remark after Lemma~\ref{lemPPart}.

To approximately sample from $\nu$, we order the vertices in $R$ arbitrarily, and then one-by-one we determine which, if any, polymer containing a given vertex $v$ is present in $\Gamma$.  Based on the previous choices, some polymers in $\cP$ are excluded from future choices as their addition would form an incompatible pair of polymers, and so the calculation of marginal probabilities involves approximating $\Xi(\cP')$, the partition function of a set of polymers $\cP' \subset \cP$.  While the polymer model associated to an arbitrary set $\cP' \subset \cP$ may not map to the hard-core model on any graph, this polymer model automatically satisfies~\eqref{eqKPcond1} if the original polymer model does, since the condition only becomes weaker on removing polymers.  Thus we can efficiently approximate $\Xi(\cP')$ using the algorithm from~\cite{JKP2}, and thus sample efficiently from $\nu$. 
\end{proof}

Next we prove Corollary~\ref{corMain}.

\begin{proof}[Proof of Corollary~\ref{corMain}]
Parts (1) and (2) follow from simple substitutions into~\eqref{eqConditionMain}. 

For (3), we set $\Delta_L= \Delta$ and $\Delta_R = c \Delta \log (\Delta)$, and ask for which $c$ we have
\begin{align}
\label{eqcor3bound}
 c \log2 \log \Delta -2 \log \Delta - \log \log \Delta - \log c - \log 6   \ge 0  \,,
\end{align}
which implies~\eqref{eqConditionMain} is satisfied. 
 Some simple calculus shows that~\eqref{eqcor3bound} is satisfied for $\Delta \ge 2$ when $c \ge 11$.  But if we use the fact that for $\Delta \le 5$, an FPTAS is given by Liu and Lu~\cite{liu2015fptas}, then we see that~\eqref{eqcor3bound} is satisfied for $\Delta \ge 6$ when $c \ge 7$.  
\end{proof}

Finally we prove Theorem~\ref{thmZero}, which follows almost immediately from the proof of Theorem~\ref{thmAlgorithm}.

\begin{proof}[Proof of Theorem~\ref{thmZero}]
Condition~\eqref{eqConditionMainComplex} is the same as Condition~\eqref{eqConditionMain} with $\lambda_L$ and $\lambda_R$ replaced with $\Lambda_L$ and $\Lambda_R$.  If $\lambda_L$ and $\lambda_R$ are complex such that $|\lambda_R| \leq \Lambda_R$ and $|1+\lambda_L| \geq 1 + \Lambda_L$, then for any polymer $\gamma$,
\begin{align*}
|w_\gamma|
=  \left|\frac{\lam_R^{|\gamma|} }{(1+\lam_L)^{|N(\gamma)|}}\right|
 &\le \left( \frac{\Lam_R}{(1+\Lam_L)^{\frac{\delta_R}{\Delta_L}}} \right) ^{|\gamma|} \,.
\end{align*}
Condition~\eqref{eqConditionMainComplex} for $\Lambda_L$ and $\Lambda_R$ then suffices for the analysis in the proof of Theorem~\ref{thmAlgorithm} to show Condition~\eqref{eqKP2} holds for $\lambda_L$ and $\lambda_R$. 
Lemma~\ref{lemKPapplied} then implies the cluster expansion for $\lambda_L$ and $\lambda_R$ converges absolutely. 
 Because the cluster expansion is a convergent power series for $\log \Xi(\cP)$, for these $\lambda_L$ and $\lambda_R$ the polymer partition function $\Xi(\cP)$ cannot be zero and so $Z(G) = (1 + \lambda_L)^{|L|} \Xi(\cP)$ is also nonzero. 
\end{proof}

\section{Correlation decay}
\label{secCorDecay}

As Dobrushin showed~\cite{dobrushin1996estimates,dobrushin1996perturbation} the cluster expansion and the \kotecky-Preiss convergence condition are very well suited to studying the cumulants and joint cumulants of random variables that are functions of polymer configurations.  

It follows almost directly from estimates such as~\eqref{eqClusterTailBound} that the magnitudes of  joint cumulants of random variables that depend on disjoint polymers decay exponentially in the distance between these polymers (Theorem~\ref{thmSemiInvariant}). The joint cumulant of a set of random variables  vanishes if there is a non-trivial partition of the set into two sets of random variables that are  independent of each other, and this holds approximately as well. Showing joint cumulants decay exponentially in the distance between polymers shows a form of decay of correlations, and in fact implies the perhaps more familiar form of~\eqref{eqSetDecay}.

While the techniques and calculations described here are very similar to those in~\cite{dobrushin1996estimates}, and the properties of joint cumulants are standard facts in the study of statistical mechanics on lattices, we present the proofs in a self-contained way in order to emphasize their elementary nature and to encourage the use of these methods in computer science and in non-lattice settings.

We again fix a graph $G$ and consider the associated polymer model with polymers $\cP$ and partition function $\Xi$.  We suppose that~\eqref{eqKP2} holds for some $\eta>0$ and calculate exact expressions for the joint cumulants in our setting using the cluster expansion.  We begin by introducing auxiliary polymer weights as follows.  Given variables $t_v$, $v\in R$, let
\begin{align*}
\tilde w_{\gamma} &= w_\gamma e^{ \sum_{v \in \gamma } t_v} \,, 
\end{align*}
and let $\tilde \Xi$ be the polymer model partition function derived from these weights.  Let $X_v= \mathbf 1_{v\in \mathbf I}$ for $\mathbf I $ drawn from $\mu_G$ and recall that the distribution of $\mathbf I \cap R$ (the set of occupied vertices in $R$) is identical to the distribution of $\bigcup_{\gamma \in \mathbf \Gamma} \gamma$ (the union of polymers) for a polymer configuration $\mathbf \Gamma$ drawn from $\nu$.  Using this we can write
\begin{align*}
\E e^{\sum_{v \in R} t_v X_v} &=  \sum_{\substack{\Gamma \subseteq \cP \\ \text{compatible}}} \nu(\Gamma) \prod_{\gamma \in \Gamma} e^{ \sum_{v \in \gamma } t_v} \\
&=  \sum_{\substack{\Gamma \subseteq \cP \\ \text{compatible}}} \frac{1}{\Xi} \prod_{\gamma \in \Gamma} w_\gamma e^{ \sum_{v \in \gamma } t_v} \\
 &= \frac{ \tilde \Xi}{ \Xi} \,.
\end{align*}

For a set $A \subseteq R$  we can then write the joint cumulant of $A$ as
\begin{align*}
\kappa(A) &:= \frac{\partial^{|A|}  \log \E e^{\sum_{v \in R} t_v X_v}  }{ \prod_{u \in A} \partial t_u  }   \Bigg |_{t_v = 0, v\in V(G)} \\ 
&=  \frac{\partial^{|A|}   \log \tilde \Xi   }{ \prod_{u \in A} \partial t_u  }   \Bigg |_{t_v = 0, v\in V(G)} \, .
\end{align*}
For a vertex $v \in R$, and a cluster $\Gamma$, define $Y_v(\Gamma)$ to be the number of polymers in $\Gamma$ containing $v$: 
\begin{align*}
Y_v(\Gamma) &= \sum_{\gamma \in \Gamma} \mathbf 1_{v \in \gamma} \,.
\end{align*}
Then using the cluster expansion for $\log \tilde \Xi$ (and the fact that it converges absolutely if $t_v \le \eta$ for all $v$), we have
\begin{align}
\nonumber
\kappa(A) &=  \frac{\partial^{|A|}     }{ \prod_{u \in A} \partial t_u  }  \sum_{\Gamma \in \cC} w(\Gamma)  \Bigg |_{t_v = 0, v\in V(G)} \\
\label{eqClusterKappa}
 &= \sum_{\Gamma \in \cC} w(\Gamma) \prod_{v\in A} Y_v(\Gamma) \,. 
\end{align}
From this expression we can immediately derive Theorem~\ref{thmSemiInvariant}.
\begin{proof}[Proof of Theorem~\ref{thmSemiInvariant}]
  We will show that for $A \subseteq R$,
\begin{align}
\label{eqClusterAbound}
\sum_{\Gamma \in \cC} |w(\Gamma) | \prod_{v \in A } Y_v(\Gamma) &\le  C  e^{-\eta  \text{MST}(A)/2}
\end{align}
where $C$ depends on $\eta$ and $|A|$.   The size of a cluster $\Gamma$ containing all vertices of $A$ is at least $\text{MST}(A)/2$, where $\text{MST}(A)$ is the size of the minimum Steiner tree in $G$ containing $A$; this Steiner tree will contain vertices in both $L$ and $R$, but at least half of its vertices will be in $R$. More generally suppose a cluster $\Gamma$ contains $Y_v(\Gamma) \ge 1$ copies of $v$ for each $v \in A$.  Then 
\begin{align}
\label{eqGammaLB}
|\Gamma | \ge \text{MST}(A)/2 + \sum_{v \in A} (Y_v(\Gamma) -1) 
\end{align}
since each additional copy of a vertex contributes $1$ to the size of $\Gamma$.

We arrange the sum on the LHS of~\eqref{eqClusterAbound} based on the number of copies of each $v \in A$:
\begin{align*}
\sum_{\substack{\Gamma \in \cC }} |w(\Gamma)| \prod_{v \in A} Y_v(\Gamma) &= \sum_{ \{y_v \ge 1 \}_{v\in A} } \sum_{\substack{\Gamma \in \cC  \\ Y_v(\Gamma) = y_v \forall v \in A}} |w(\Gamma)| \prod_{v \in A} y_v \,. 
\end{align*}
Now since Condition~\eqref{eqKP2} holds with constant $\eta>0$, we can use \eqref{eqGamvBound} and~\eqref{eqGammaLB}:
\begin{align*}
\sum_{\substack{\Gamma \in \cC }} |w(\Gamma)| \prod_{v \in A} Y_v(\Gamma)  &\le \sum_{ \{y_v \ge 1 \}_{v\in A} } e^{-\eta ( \text{MST}(A)/2 + \sum_{v \in A} (y_v -1))} \prod_{v \in A} y_v \\
&\le e^{-\eta  \text{MST}(A)/2} \sum_{ \{y_v \ge 1 \}_{v\in A} } e^{-\eta  \sum_{v \in A} (y_v -1))} \prod_{v \in A} y_v \\
&\le C  e^{-\eta  \text{MST}(A)/2} \,,
\end{align*}
and so the conclusion of Theorem~\ref{thmSemiInvariant} holds with $\eps = \eta/2$ and  
$$C =\sum_{ \{y_v \ge 1 \}_{v\in A} } e^{-\eta  \sum_{v \in A} (y_v -1))} \prod_{v \in A} y_v < \infty \, .$$ 
\end{proof}

To prove Theorem~\ref{thmCorDecay} we need to do two things.  First we need to show that the exponential decay of joint cumulants in the minimum Steiner tree size implies that $|\mu_{A \cup B} - \mu_A \mu_B|$ decays exponentially in the distance from $A$ to $B$; second we need to extend these results from sets of vertices contained solely in $R$ to general sets of vertices in $L \cup R$.

\begin{lemma}
\label{lemBasisChange}
Suppose $A \subset R$ and $B \subset R$ are disjoint.  Suppose also that for any $S \subseteq A \cup B$ so that $S \cap A \ne \emptyset$ and $S \cap B \ne \emptyset$, we have 
\begin{align*}
|\kappa(S)|  &\le \eps.  
\end{align*}
Then 
\begin{align*}
\left |\mu_{A \cup B} -\mu_A \mu_B    \right | &\le C \eps
\end{align*}
where  $C$ depends only on $|A|$ and $|B|$. 
\end{lemma}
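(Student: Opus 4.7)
The plan is to use the classical moment-cumulant (exponential formula) identity together with the fact that, for the indicator variables $X_v$, the joint moments $\mu_T = \E\bigl[\prod_{v \in T} X_v\bigr]$ can be expanded in terms of joint cumulants over set partitions. Concretely, for any finite $T \subseteq R$,
\begin{equation*}
\mu_T \;=\; \sum_{\pi \in \Pi(T)} \prod_{S \in \pi} \kappa(S),
\end{equation*}
where $\Pi(T)$ denotes the set of set partitions of $T$. This follows from differentiating $\log \E e^{\sum_v t_v X_v}$ the usual way (Möbius inversion on the partition lattice), and in our setting it can also be read off from the cluster-expansion expression \eqref{eqClusterKappa} that was just derived.

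Applying the identity to $T = A \cup B$, and separately to $T = A$ and $T = B$ and multiplying, gives
\begin{equation*}
\mu_A \mu_B \;=\; \sum_{\substack{\pi \in \Pi(A \cup B) \\ \pi \text{ refines } \{A,B\}}} \prod_{S \in \pi} \kappa(S),
\end{equation*}
since a pair $(\pi_1, \pi_2) \in \Pi(A) \times \Pi(B)$ is exactly a partition of $A \cup B$ no block of which crosses the $A/B$ divide. Subtracting,
\begin{equation*}
\mu_{A \cup B} - \mu_A \mu_B \;=\; \sum_{\pi \in \Pi^*(A \cup B)} \prod_{S \in \pi} \kappa(S),
\end{equation*}
where $\Pi^*(A \cup B)$ is the set of partitions having at least one block $S$ with $S \cap A \ne \emptyset$ and $S \cap B \ne \emptyset$.

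Next I would bound each such term. For a partition $\pi \in \Pi^*$, fix one distinguished crossing block $S_0 = S_0(\pi)$; by hypothesis $|\kappa(S_0)| \le \eps$. For every other block $S \in \pi$, I would apply Theorem~\ref{thmSemiInvariant} with $\text{MST}(S) \ge 0$, which gives $|\kappa(S)| \le C_{|S|}$, a constant depending only on $|S|$. Since every block has size at most $|A|+|B|$, there is a uniform bound $M = M(|A|,|B|)$ on $|\kappa(S)|$ across all blocks arising in the sum. Therefore
\begin{equation*}
|\mu_{A \cup B} - \mu_A \mu_B| \;\le\; \eps \cdot M^{|A|+|B|-1} \cdot |\Pi^*(A \cup B)| \;\le\; C\eps,
\end{equation*}
with $C = M^{|A|+|B|-1} B_{|A|+|B|}$, where $B_n$ is the Bell number counting partitions of an $n$-set. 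Both factors depend only on $|A|$ and $|B|$, giving the claim.

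The only slightly delicate point is the uniform bound on non-crossing cumulants: one must be sure the constant produced by Theorem~\ref{thmSemiInvariant} (for the full $R$-sets arising as blocks) really depends only on $|S|$ and not on the graph, but inspecting that proof shows the relevant constant is $\sum_{\{y_v \ge 1\}_{v \in S}} e^{-\eta \sum (y_v-1)} \prod y_v$, which indeed depends only on $|S|$ and $\eta$. With this in hand the combinatorial bookkeeping above is routine.
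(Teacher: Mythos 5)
Your proposal is correct and follows essentially the same route as the paper's proof: the moment--cumulant partition identity, cancellation of the partitions whose blocks do not cross the $A$/$B$ divide, and a bound of $\eps$ on one distinguished crossing block times uniform bounds on the remaining blocks and the number of partitions. The only (minor) difference is in bounding the non-crossing blocks: the paper uses the generic bound on joint cumulants of indicator (hence bounded) random variables, which depends only on the number of variables, whereas you invoke Theorem~\ref{thmSemiInvariant}, which is not circular but makes your constant depend on $\eta$ as well --- strictly a bit weaker than ``$C$ depends only on $|A|$ and $|B|$,'' though harmless in the context where the lemma is applied.
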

\begin{proof}  
We will use the following formula (see e.g.~\cite{leonov1959method}):
\begin{align*}
\mu_A &= \sum _{\pi \in \Sigma(A)} \prod_{S \in \pi} \kappa(S)  \,,
\end{align*}
where $\Sigma (A)$ is the family of all set partitions of $A$.  
This gives
\begin{align*}
\mu_{A \cup B} - \mu_A \mu_B &=  \sum _{\pi \in \Sigma(A \cup B)} \prod_{S \in \pi} \kappa(S) - \left( \sum _{\pi \in \Sigma(A)} \prod_{S \in \pi} \kappa(S) \right) \left (\sum _{\pi \in \Sigma(B)} \prod_{S \in \pi} \kappa(S) \right) \,.
\end{align*}
Now if we restrict the first sum on the RHS to those partitions whose parts are either entirely within $A$ or entirely within $B$, then we obtain exactly  $\mu_A \mu_B$ which cancels the second term on the RHS, and so we obtain (for $|\pi|$ the number of sets in partition $\pi$):
\begin{align*}
\mu_{A \cup B} - \mu_A \mu_B &= \sum _{\substack{\pi \in \Sigma(A \cup B)\\ \exists S' \in \pi: S' \cap A \ne \emptyset, S' \cap B \ne \emptyset}} \prod_{S \in \pi} \kappa(S)\,. 
\end{align*}
Now since $|\kappa(S')| \le \eps$ for $S'$ that intersects both $A$ and $B$, and the partitions in the above sum have at least one such $S'$,  we have 
\begin{align*}
\left |  \mu_{A \cup B} - \mu_A \mu_B \right | &\le C \eps 
\end{align*}
where $C$ is the number of set partitions of $A \cup B$,  times the maximum size of a joint cumulant of at most $|A| + |B|$  indicator random variables raised to the $|A| + |B|$ power, which depends only on $|A|$ and $|B|$~\cite{leonov1959method}.
\end{proof}

With these ingredients we prove Theorem~\ref{thmCorDecay}. 

\begin{proof}[Proof of Theorem~\ref{thmCorDecay}]
It suffices to prove the second statement in the theorem as the first follows by taking $A= \{u \}, B =\{v\}$.    First we consider the special case $A \subset R, B \subset R$.  

For any $S \subseteq R$ that intersects both $A$ and $B$, we have $\text{MST}(S) \ge D(A,B)$.  Then from Theorem~\ref{thmSemiInvariant} we have $|\kappa(S)| \le C e^{-\eta D(A,B)/2}$ for such $S$.  Then applying Lemma~\ref{lemBasisChange}, we obtain 
\begin{align*}
\left | \mu_{A \cup B}  - \mu_A \mu_B \right | &\le C' e^{-\eta D(A,B)/2} \,.
\end{align*}

Now we consider the general case $A \subset V(G), B \subset V(G)$.    We assume that $D(A, B)>2$ as otherwise the statement is trivial.  We can also assume that both $A$ and $B$ are independent sets in $G$, as otherwise both $\mu_{A \cup B}$ and $\mu_A \mu _B$ are $0$.    We say a vertex $v$ is \textit{unblocked} by the independent set $I$ if $N(v) \cap I = \emptyset$.  Let $\hat \mu_v = \Pr[ v \text{ unblocked}]$, and for a set of vertices $S$, $\hat \mu_S = \Pr[ S \text{ unblocked}]$.  Then if $S$ is an independent set itself, 
\begin{align*}
\mu_S = \left( \frac{\lam}{1+\lam} \right) ^{|S|} \hat \mu _S \,. 
\end{align*}
To see this, note that $S$ must be unblocked to be in the independent set.  Given that $S$ is unblocked, since $S$ is independent itself, the probability that each vertex $v\in S$ is in the independent set is $\frac{\lam}{1+\lam}$ and these events are conditionally independent over the  vertices in $S$.  

For $S \subset V(G)$, we introduce the notation $\rho_S = \Pr[ S \cap \mathbf I = \emptyset]$. If $S \subset L$, by definition, $\hat \mu_S = \rho_{N(S)}$.  
Using inclusion exclusion, we have, for $T \subset R$,
\begin{align*}
\rho_T &=   \sum_{Q \subseteq T} (-1)^{|Q|} \mu_Q \,,
\end{align*}
where the sum includes the empty set and $\mu_{\emptyset} =1$.

Now suppose $A \subset V(G), B \subset V(G)$, and let $A_L = A \cap L, A_R = A \cap R$, $B_L = B \cap L, B_R = B \cap R$.  Then, since $A$ is an independent set in $G$,  we can write
\begin{align*}
\mu_A &= \left (\frac{\lam}{1+\lam}   \right )^{|A_L| } \Pr[ A_R \subseteq \mathbf I \wedge N(A_L) \cap \mathbf I = \emptyset] \\
&=    \left (\frac{\lam}{1+\lam}   \right )^{|A_L| }  \sum_{Q \subseteq N(A_L)} (-1)^{|Q|} \mu_{Q \cup A_R}  \,.
\end{align*}
We can write similar formulae for $\mu_B$ and $\mu_{A \cup B}$.  Considering $\mu_A \mu_B$, and using the fact that $N(A_L) \cap N(B_L) = \emptyset$ since $D(A,B)>2$, we can simplify
\begin{align*}
&\left (  \sum_{Q \subseteq N(A_L)} (-1)^{|Q|} \mu_{Q \cup A_R}   \right )  \left( \sum_{Q' \subseteq N(B_L)} (-1)^{|Q'|} \mu_{Q' \cup B_R}   \right) \\
 & \hspace{6cm}=   \sum_{Q \subseteq N(A_L), Q' \subseteq N(B_L)} (-1)^{|Q \cup Q'|} \mu_{Q \cup A_R} \mu_{Q' \cup B_R} \, .
\end{align*}

Therefore we have
\begin{align*}
\mu_{A \cup B} - \mu _A \mu_B  &=   \left (\frac{\lam}{1+\lam}   \right )^{|A_L| + |B_L| }  \sum_{Q \subseteq N(A_L), Q' \subseteq N(B_L)} \left[  \mu_{Q \cup A_R \cup Q' \cup B_R} - \mu_{Q \cup A_R} \mu_{Q' \cup B_R}    \right ] \,,
\end{align*}
and so, as $\lam/(1+\lam) < 1$, 
\begin{align*}
\left| \mu_{A \cup B} - \mu _A \mu_B \right| &\le  2^{|N(A_L)| + |N(B_L)| } \max_{Q \subseteq N(A_L), Q' \subseteq N(B_L)}  \left|   \mu_{Q \cup A_R \cup Q' \cup B_R} - \mu_{Q \cup A_R} \mu_{Q' \cup B_R}  \right | \,.
\end{align*}

The first factor is bounded by a constant that depends only on $|N(A)|, |N(B)|$.   Now let $\hat A = Q \cup A_R$ and $ \hat B = Q' \cup B_R$.  Then $\hat A \subset R$ and $\hat B \subset R$, and moreover $D(\hat A, \hat B) \ge D(A,B) -2$, and so we can apply the special case above to obtain 
 \begin{align*}
\left|  \mu_{Q \cup A_R \cup Q' \cup B_R} - \mu_{Q \cup A_R} \mu_{Q' \cup B_R}    \right | &\le C' e^{ - \eta D(A,B)/2 +1} \,,
\end{align*}
and combining these bounds we get 
\begin{align*}
\left| \mu_{A \cup B} - \mu _A \mu_B \right| &\le C e^{-\eta D(A,B)/2}
\end{align*}
for a  constant $C$ that depends only on $\eta, |A|, |B|, |N(A)|, |N(B)|$. Setting $\eps = \eta/2$ we obtain Theorem~\ref{thmCorDecay}.  
\end{proof}

\section*{Acknowledgements}
The authors thank Guus Regts, Piyush Srivastava, Prasad Tetali, and Matthew Jenssen for many helpful conversations.

This work was done while the authors were participating in the Geometry of Polynomials program at the Simons Institute for the Theory of Computing.  SC is supported by NSF award DMS-1803325.  WP supported in part by NSF Career award DMS-1847451.

\end{document}